\documentclass[onecolumn, a4paper, unpublished]{quantumarticle}
\pdfoutput=1

\usepackage[utf8]{inputenc}
\usepackage[english]{babel}
\usepackage[T1]{fontenc}
\usepackage{hyperref}
\usepackage{mathtools}
\usepackage{amsfonts}
\usepackage{amsthm}
\usepackage{braket}
\usepackage{multirow}
\usepackage[ruled, vlined, noend]{algorithm2e}
\usepackage{subcaption}
\usepackage{adjustbox}
\usepackage{tikz}

\usepackage[compat=newest]{yquant}
\useyquantlanguage{groups}
\usetikzlibrary{quotes, fit}

\PassOptionsToPackage{compress}{natbib}
\usepackage[numbers]{natbib}

\newtheorem*{lemma}{Lemma}

\bibliographystyle{quantum}

\title{Sparse quantum state preparation with improved Toffoli cost}

\author{Felix Rupprecht}
\orcid{0009-0004-2738-9711}
\email{felix.rupprecht@dlr.de}
\affiliation{Institute of Quantum Technologies, German Aerospace Center, 89081 Ulm, Germany}
\author{Sabine Wölk}
\orcid{0000-0001-9137-4814}
\affiliation{Institute of Quantum Technologies, German Aerospace Center, 89081 Ulm, Germany}
\affiliation{Center for Integrated Quantum Science and Technology (IQST), Ulm University, 89081 Ulm, Germany.}
\date{}

\begin{document}
\begin{abstract}
    The preparation of quantum states is one of the most fundamental tasks in quantum computing, and a key primitive in many quantum algorithms.
    Of particular interest to areas such as quantum simulation and linear-system solvers are sparse quantum states, which contain only a small number~$s$ of non-zero computational basis states compared to a generic state.
    In this work, we present an approach that prepares $s$-sparse states on $n$ qubits, reducing the number of Toffoli gates required compared to prior art.
    We work in the established framework of first preparing a dense state on a $\lceil{\log(s)}\rceil$-qubit sub‑register, and then mapping this state to the target state via an isometry, with the latter step dominating the cost of the full algorithm.
    The speed‑up is achieved by designing an efficient algorithm for finding and implementing the isometry.
    The worst-case Toffoli cost of our isometry circuit, which may be viewed as a batched version of an approach by Malvetti et al., is essentially $2s$ for sufficiently large values of $n$, yielding roughly a $\log(s)/2$ improvement factor over the state-of-the-art.
    In numerical benchmarks on randomly chosen states, the cost is closer to $s$.
    With the improved isometry circuit, we examine the dense-state preparation step and present ways to optimize the joint cost of both steps, particularly in the case of target states with purely real coefficients, by outsourcing some sub-tasks from the dense-state preparation to the isometry.
\end{abstract}
\maketitle

\section{Introduction}
\label{sec:introduction}

Preparing a quantum state~$\ket{\Theta}$, i.e., finding a quantum circuit $G$ such that $\ket{\Theta} = G \ket{0}$, is a key primitive in many areas of quantum computing, with applications ranging from linear-system solvers~\cite{morales2025} and quantum machine learning~\cite{Schuld2015} to quantum simulation~\cite{berry2024, Fomichev2024}.
For this reason, different approaches for preparing generic quantum states~\cite{Low2024, sun2023, zhang2022, Gui2024} and states in particular formats, such as matrix-product-states~\cite{berry2024, Fomichev2024, Schoen2005, Malz2024, Smith2024}, have been developed.
While many works~\cite{sun2023, zhang2022, Gui2024, Malz2024, Smith2024, Mao2024, Gleining2021, Ramacciotti2024, Mozafari2022, li2025} concentrate on optimizing the number of one- and two-qubit gates and/or the depth of their state-preparation circuits, the figure of merit for fault-tolerant quantum computation is often taken to be the number of Toffoli and/or T-gates~\cite{berry2024, Fomichev2024}.
This is due to the fact that those gates, which may be transformed into each other~\cite{Beverland2020}, are usually more costly to implement in a fault-tolerant manner compared to Clifford gates, as they are created by protocols like magic-state distillation~\cite{Litinski2019} and cultivation~\cite{gidney2024}.
We follow this approach and take the number of qubits required, together with the number of Toffoli gates needed, as our figures of merit.

A class of quantum states of particular interest~\cite{morales2025, Fomichev2024} is the class of sparse quantum states.
A~$s$-sparse quantum state on $n$~qubits is a linear combination $\ket{\Theta} = \sum_{i=0}^{s-1} c_i \ket{C_i}$ of computational basis states with $s \ll 2^n$, i.e., the number of computational basis states with non-zero amplitude is small compared to a generic state in the Hilbert space.
Approaches for efficiently preparing such sparse states have been studied in multiple works~\cite{Fomichev2024, Mao2024, Gleining2021, Ramacciotti2024, Mozafari2022, li2025, Malvetti2021, Tubman2018, deVeras2022} and can, for the most part, be categorized into two groups.
In~\cite{Mao2024, Gleining2021, Tubman2018, deVeras2022}, the sparse state is built iteratively in $s$~steps, where each step prepares a single basis state with the corresponding coefficient.
The non-Clifford gate cost of each iteration step is given by the implementation of a multi-controlled- rotation gate, whose number of control qubits may be bounded~\cite{Gleining2021} from above by $\lceil{\log(s)}\rceil$.
Together with the cost of synthesizing the rotations~\cite{ross2016, Kliuchnikov2023, Sanders2020}, this leads to a Toffoli cost exceeding $s\lceil{\log(s)}\rceil$, where here and in the rest of this work the logarithm is base-$2$.

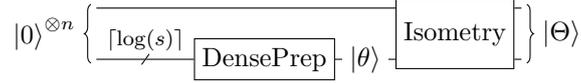
\begin{figure}[t]
    \centering
    \begin{tikzpicture}
        \begin{yquant}
            qubit {} rest;
            qubit {} enum;
            init {$\ket0^{\otimes n}$} (enum, rest);
            ["north:$\lceil{\log(s)}\rceil$" {font=\protect\footnotesize, inner sep=1pt}]
            slash enum;
            box {DensePrep} enum;
            text {$\ket{\theta}$} enum;
            box {Isometry} (enum, rest);
            output {$\ket{\Theta}$} (enum, rest);
        \end{yquant}
    \end{tikzpicture}
    \caption{Sparse state-preparation circuit. First, the coefficients of the sparse state~$\ket{\Theta}$ are encoded into a dense state $\ket{\theta}$ within a subspace register; subsequently, this dense state is transformed into the target state~$\ket{\Theta}$ via an isometry.}
    \label{fig:preparation_circuit}
\end{figure}

Instead of implementing $s$ successive rotations, the second group of algorithms~\cite{Fomichev2024, Ramacciotti2024, li2025, Malvetti2021, Tubman2018}, which includes the approach presented in this paper, consists of a two-step process depicted in Figure~\ref{fig:preparation_circuit}.
In \emph{Step~1}, a dense state $\ket{\theta} = \sum_{i=0}^{s-1} c_{i} \ket{f(i)}$ is prepared in a $\lceil{\log(s)}\rceil$-qubit subspace register, with $f$ being a bijection from $[s]$ to an $s$-element subset of $[2^{\lceil{\log(s)}\rceil}]$.
Note that by $[k]$, we denote the set of integers $\{0, \dots, k-1\}$.
Then, in the subsequent \emph{Step~2}, an isometry is applied, mapping $\ket{f(i)}$ to $\ket{C_{i}}$, yielding the desired state $\ket{\Theta}$.
The Toffoli cost of this approach is the cost of the dense state preparation, which is usually implemented with a Grover-Rudolph style algorithm~\cite{Low2024}, plus the cost of the isometry.
Since in~\cite{Fomichev2024, Ramacciotti2024, Tubman2018} the bijection$f$ is chosen to be the identity, a generic permutation of basis states must be implemented as the isometry circuit in their approaches.
In~\cite{Fomichev2024}, this incurs a Toffoli cost for the isometry of $s(2\lceil{\log(s)}\rceil -2 ) + \tilde{s}$, which, although not explicitly considered in the other papers, seems to be the best Toffoli gate count within the group~\cite{Fomichev2024, Ramacciotti2024, Tubman2018}.
Here and in the rest of the paper, we define $\tilde{s}\coloneqq 2^{\lceil \log(s)\rceil}$.
That said, as we will recall and analyze in the next section, the freedom of choosing a non-trivial bijection allows for reducing the cost of the isometry step.
Using the method of~\cite{Malvetti2021}, this leads to a worst-case isometry cost of $s(\lceil{\log(s)}\rceil - 1)$ while requiring $\lceil{\log(s)}\rceil - 2$ ancilla qubits.
However, with the dense state preparation in Step~1 being implementable with $2.5\tilde{s}$ Toffolis or better (cf. Section~\ref{sec:further_optimizations}), even in this improved variant the isometry is still the bottleneck of the full algorithm and warrants further optimization.

We note that there are other approaches to sparse state preparation, potentially with a smaller number of required gates~\cite{Fomichev2024, zhang2022, vilmart2025resourceefficientsynthesissparsequantum}; however, the proposals require a substantial amount of additional qubits.

The contributions of this paper are as follows.
We improve the Toffoli cost of sparse state preparation by constructing isometry circuits with the worst-case Toffoli count bounded by
\begin{equation}
\label{eq:toffoli_cost}
\biggl \lceil \frac{s}{m} \biggr \rceil \left(2m +\frac{\log(\tilde{s}/m)}{2}-3\right) + \log(\tilde{s}^2/m)
\end{equation}
and ancilla cost $\lceil{\log(s)}\rceil - 1$.
Here, $m\coloneqq \max_{p\in \mathbb{N}}(2^p \mid 2^p \leq n - \lceil\log(s)\rceil)$ denotes the size of the largest register that fits in the $n$ system qubits without the $\lceil\log(s)\rceil$-qubit subspace register and whose qubit number is a power of two.
The Toffoli-cost improvement for the isometry circuit compared to~\cite{Malvetti2021} is plotted in Figure~\ref{fig:isometry_improvement} on the left.
The algorithm enabling these savings may be seen as a batched version of the approach in~\cite{Malvetti2021}.
There, in the inverse circuit of the isometry, sequentially, each target basis state in $\ket{\Theta}$ is mapped from the full space into the subspace, requiring a $\lceil{\log(s)}\rceil$-controlled-X gate for setting the qubits outside the subspace register to zero.
The main idea of this paper is to create batches of states of size $m$ and, instead of using $m$ individual multi-controlled-$X$ gates for the zeroing, use a single call of a partial unary iteration circuit for doing so.
The classical algorithm for finding the isometry circuit with classical runtime $O(s^2n)$ is explained in Section~\ref{sec:isometry_circuit}.
Applying the algorithm to random $s$-sparse states in Figure~\ref{fig:isometry_improvement} (right) shows Toffoli counts approaching $s$ for sufficiently large $n$, which is roughly half the worst-case cost.

\begin{figure}[t]
    \centering
    \includegraphics{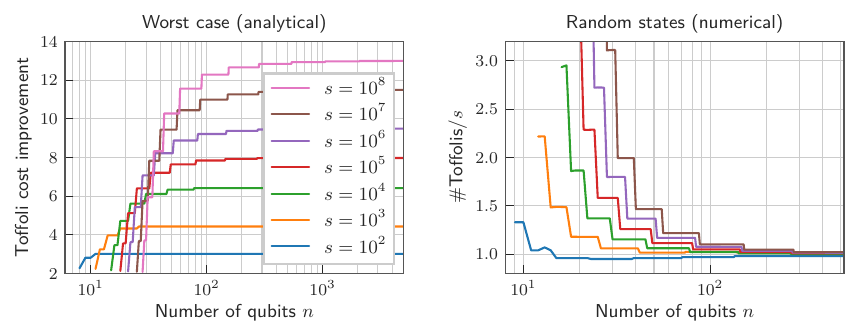}
    \caption{(Left) Improvement factor of the Toffoli cost for implementing the isometry step in this work \eqref{eq:toffoli_cost} over the method of~\cite{Malvetti2021}. For both approaches, we considered the worst-case cost. Since the upper bound \eqref{eq:toffoli_cost} is quite loose when $s$ and $m$ are close, the improvement according to \eqref{eq:toffoli_cost} reaches a maximum at a certain $n$ and then decreases, even though the actual improvement grows monotonically with $n$ toward an asymptote. At each $n$, we therefore take the maximum improvement factor from all $n' \leq n$ in order to avoid this misrepresentation. (Right) Number of Toffolis required for implementing the isometry step with our algorithm for random $s$-sparse states on $n$~qubits divided by $s$.}
    \label{fig:isometry_improvement}
\end{figure}

A partial unary iteration circuit can be seen as a variant of the standard unary iteration circuit $U = \sum_{i=0}^{2^u-1} \ket{i}\bra{i} \otimes U_i$ proposed in~\cite{Babbush2018}.
It does not iterate over all basis states $[\ket{0},\ket{2^u}]$ of a $u$-qubit address register, but only over an interval $[\ket{l},\ket{r}]$.
Throughout the paper, the square brackets denote intervals of consecutive basis states ordered according to their binary representation in big-endian.
In Section~\ref{sec:partial_unary_iteration}, we consider two types of such partial unary iteration (PUI) circuits, which we call restricted and unrestricted, and analyze their cost.
A restricted partial unary iteration circuit guarantees that $U_i$ is non-trivial only for states $\ket{i} \in [\ket{l},\ket{r}]$ within the target interval, while the unrestricted version, with an improved Toffoli count, relaxes this guarantee by allowing the circuit to have non-trivial contributions $\ket{i}\bra{i}\otimes \tilde{U}_i$ for $i>r$.
If the unrestricted version is used, one needs to keep track of those contributions so that they can be taken into account when building subsequent parts of the circuit.
For the isometry circuit in Section~\ref{sec:isometry_circuit} with the Toffoli bound~\eqref{eq:toffoli_cost}, unrestricted PUIs were applied.

\begin{table}[b]
    \centering
    {\renewcommand{\arraystretch}{1.5}%
    \resizebox{\textwidth}{!}{%
    \begin{tabular}{|c||c|c|c||c|c|}
        \hline
        \multirow{2}{*}{State} & \multicolumn{3}{c||}{Isometry} & \multicolumn{2}{c|}{Dense state preparation}\\
        \cline{2-6}
        & Toffolis & Qubits & Reference & Toffolis & Qubits \\
        \hline
        \hline
        \multirow{4}{*}{generic} & $s(2\lceil\log(s)\rceil - 2) + \tilde{s}$ & $n + 5\lceil\log(s)\rceil-3$ &~\cite{Fomichev2024} &  \multirow{3}{*}{$2\tilde{s}/2^{r} + b(\lceil\log(s)\rceil - 1) (2^r -1)$} & \multirow{5}{39pt}{$b2^r -r\; +$\\$2\lceil\log(s)\rceil$}\\
        \cline{2-4}
        & $s(\lceil\log(s)\rceil-1)$ & $n + \lceil\log(s)\rceil-2$ &~\cite{Malvetti2021} & &\\
        \cline{2-4}
        & $\lceil \frac{s}{m}\rceil(2m +\log(\tilde{s}/m)-3)$ & $n + \lceil\log(s)\rceil + 1$ & This work & &\\
        \cline{2-5}
        & \begin{tabular}{@{}r@{}}$\lceil \frac{s}{m}\rceil(2m +\frac{1}{2}\log(\tilde{s}/m)-3)$ \\ $+ \log(\tilde{s}^2/m)$\end{tabular} & $n + \lceil\log(s)\rceil - 1$ & This work & $3\tilde{s}/2^{r} + (b\lceil\log(s)\rceil -b +1) (2^r -1)$&\\
        \cline{1-5}
        real & $\lceil \frac{s}{m}\rceil(2m +\log(\tilde{s}/m)-3)$ & $n + \lceil\log(s)\rceil + 1$ & This work & $\tilde{s}/2^{r} + b(\lceil\log(s)\rceil - 2) (2^r -1)$ &\\
        \hline
    \end{tabular}%
    }}
    \caption{Upper bounds for the Toffoli and qubit costs of the isometry and dense-state-preparation steps. We write $\tilde{s}\coloneqq 2^{\lceil \log(s)\rceil}$ and let $m$ be the greatest power of two such that $m \leq n - \lceil\log(s)\rceil$. For the isometry cost of~\cite{Malvetti2021}, it is assumed that the multi-controlled-X gates are implemented via AND gates. For the isometry approaches from the literature, the leading Toffoli-cost term of the dense state preparation may be assumed to be $2$ instead of $3$, since a slight modification of the isometry circuits, analogous to our work in Section~\ref{sec:further_optimizations}, allows the dense state preparation registers to be uncomputed at no extra cost. For simplicity, we choose a single~$r$ for all clean QROAM calls in the dense state preparation, where $b$ is the number of bits for the rotation angles. We exclude the costs for preparing the phase-gradient state and do not count its qubits. Moreover, we assume that the number of qubits is largest during the QRO(A)Ms and not during the rotations.}
    \label{tab:costing}
\end{table}

With the reduced cost of the isometry, the dense state preparation can now be a major contributor to the total cost of the full sparse state preparation circuit.
Therefore, in Section~\ref{sec:further_optimizations}, we review recent dense state preparation techniques with their Toffoli/qubit trade-offs and analyze ways of reducing the overall cost of the full sparse state preparation algorithm.
By replacing the unrestricted partial iterations in the isometry with restricted ones, we can move parts of the dense state preparation to the isometry circuit.
This is particularly useful in the case of real states, where it allows us to insert the complete phase information of the state in the isometry, reducing the Toffoli cost of the dense state preparation step by a factor of up to three.
The resulting worst case resource counts for the algorithms are given in Table~\ref{tab:costing}. 

Quantum circuits for the sparse state preparation and the necessary sub-circuits are implemented within the \texttt{Qualtran} framework~\cite{harrigan2024}, with the computationally more expensive classical algorithms for finding the isometry outsourced to SIMD-accelerated Rust code.
The full code and other assets created for this paper are available on Zenodo~\cite{rupprecht2025}.

\section{Isometry circuit}
\label{sec:isometry_circuit}

In this section, we present the classical algorithm for constructing the improved isometry circuit, together with the bijection $f$.
We start with the classically given target state $\ket{\Theta} = \sum_{i=0}^{s-1} c_i \ket{C_i}$ and think of the computational basis states $\ket{C_i}$ as binary bitstrings in a tableau consisting of $s$~rows and $n$ columns, with the state $\ket{C_i}$ constituting row $i$.
We start indexing at $0$ from the left and denote by $\ket{e_k}$ the computational basis state with qubit $k$ set to $\ket{1}$ and all other qubits set to $\ket{0}$.

We divide the bitstrings $\ket{C_i} = \ket{C_i'}^s\ket{C_i''}^r$ at position $\lceil{\log(s)}\rceil$, such that $\ket{C_i'}^s$ is a state within the $\lceil{\log(s)}\rceil$-qubit subspace register and $\ket{C_i''}^r$ a state on the remaining qubits that make up the non-subspace register.
The task of the algorithm is now to find a circuit and a bijection $f$ such that each state $\ket{C_i} = \ket{C_i'}^s\ket{C_i''}^r$ gets mapped to $\ket{f(i)}^s\ket{0}^r$, i.e., we want to map the target state into the subspace.
The isometry circuit in Step 2 is then simply the inverse of the circuit found by the algorithm.

\begin{figure}[b]
    \centering
    \begin{tikzpicture}
        \begin{yquant*}[every control/.append style={radius=0.8mm}]
            qubit {} reg[7];
            init {$\begin{matrix}
            \ket{011}^s\ket{1011}^r\\
            \ket{111}^s\ket{0111}^r\\
            \ket{100}^s\ket{1010}^r\\
            \ket{011}^s\ket{0110}^r\\
            \ket{101}^s\ket{1101}^r\\
            \ket{100}^s\ket{1111}^r\\
            \ket{111}^s\ket{1010}^r
            \end{matrix}\eqcolon \ket{\Theta}$} (-);
            cnot reg[5-6] | reg[3];
            cnot reg[3] | reg[1-2] ~ reg[0];
            align -;
            cnot reg[5-6] | reg[4];
            cnot reg[4] | reg[0-2];
            align -;
            cnot reg[6] | reg[3];
            cnot reg[3] | reg[0] ~ reg[1-2];
            align -;
            cnot reg[4], reg[2] | reg[3];
            cnot reg[3] | reg[1] ~ reg[2], reg[0];
            align -;
            cnot reg[2] | reg[3];
            cnot reg[3] |reg[2], reg[0] ~ reg[1];
            align -;
            cnot reg[1], reg[5] | reg[4];
            cnot reg[4] | reg[0-1] ~ reg[2];
            align -;
            cnot reg[0-2] | reg[3];
            cnot reg[3] ~ reg[0-2];
            align -;
            output {$\;\begin{matrix}
            \ket{011}^s\ket{0000}^r\\
            \ket{111}^s\ket{0000}^r\\
            \ket{100}^s\ket{0000}^r\\
            \ket{010}^s\ket{0000}^r\\
            \ket{101}^s\ket{0000}^r\\
            \ket{110}^s\ket{0000}^r\\
            \ket{000}^s\ket{0000}^r
        \end{matrix}$} (-);
        \end{yquant*}
    \end{tikzpicture}
    \caption{Isometry circuit created by the algorithm in~\cite{Malvetti2021}, mapping $\ket{\Theta}$ to a $3$-qubit subspace. For each element outside the subspace, apply CX gates to obtain $\ket{k}^s\ket{e_j}^r$ for a $\ket{k}^s$ not yet present in the subspace and a $j\in \{0, \dots, 3\}$. Then, use a multi-controlled-X gate to transform $\ket{k}^s\ket{e_j}^r$ to $\ket{k}^s\ket{0}^r$. We index qubits from top to bottom and bitstrings from left to right.}
    \label{fig:malvetti}
\end{figure}
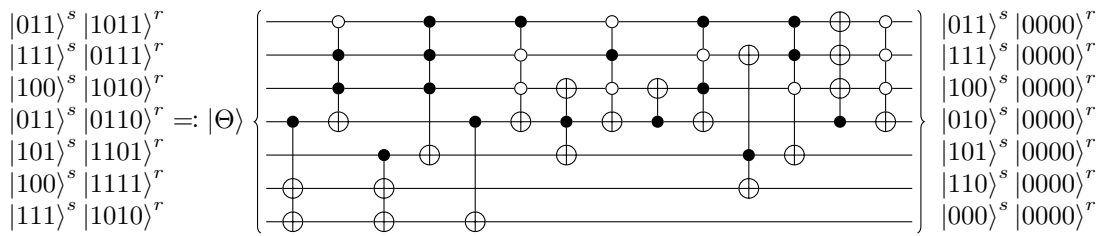

Before giving our algorithm for finding the improved isometry circuit, we briefly recall the approach of Malvetti et al.~\cite{Malvetti2021}, with an example circuit shown in Figure~\ref{fig:malvetti}.
They iterate over all states~$\ket{C_i} = \ket{C_i'}^s\ket{C_i''}^r$ and apply the following logic.
If $\ket{C_i''}^r$ is the zero state, i.e., the state~$\ket{C_i}$ already lies in the subspace, set $f(i)\coloneqq C_i'$.
Otherwise, it has a non-zero qubit at position $l$, and, using CX gates controlled on this qubit, one can transform $\ket{C_i} = \ket{C_i'}^s\ket{C_i''}^r$ into $\ket{k}^s\ket{e_l}^r$ for some basis state $\ket{k}^s$ that is not yet present in the subspace register.
Note that this does not affect states already in the subspace, as the control qubit in the non-subspace register is $\ket{0}$ for those states.
The iteration step then finishes by setting $f(i)\coloneqq k$ and zeroing the lone $\ket{1}$-qubit in the non-subspace register of $\ket{k}^s\ket{e_l}^r$ with a $\lceil\log(s)\rceil$-controlled-X gate controlled on the content~$\ket{k}$ of the subspace register.
This yields $\ket{k}^s\ket{0}^r$ as desired, while leaving states already in the subspace untouched by the choice of $\ket{k}^s$.
The worst-case Toffoli cost of this approach is given by $s(\lceil\log(s)\rceil - 1)$ if the multi-controlled-X gates are decomposed into Toffoli gates with the AND-gate construction of~\cite{Babbush2018, Gidney2018}, requiring $\lceil\log(s)\rceil - 2$ ancilla qubits.

The main idea of this paper is to find a version of the previous algorithm that does not require $s$ individual calls to a multi-controlled-X gate performing the zeroing $\ket{k}^s\ket{e_l}^r \mapsto \ket{k}^s\ket{0}^r$.
Instead, we sequentially create $\lceil s/m \rceil$ length-$m$ batches of the form $(\ket{k}^s\ket{e_0}^r, \dots, \ket{k+m-1}^s\ket{e_{m-1}}^r)$ and, for each batch, use a partial unary iteration circuit introduced in Section~\ref{sec:partial_unary_iteration} for realizing the zeroing operation
\begin{equation}
\label{eq:zeroing}
    (\ket{k}^s\ket{e_0}^r, \dots, \ket{k+m-1}^s\ket{e_{m-1}}^r) \mapsto (\ket{k}^s\ket{0}^r, \dots, \ket{k+m-1}^s\ket{0}^r).
\end{equation}
Since this partial unary iteration is most efficient if $m$ is a power of two, we take it to be the largest power of two that is less than or equal to the size of the non-subspace register, i.e., $m \leq n - \lceil\log(s)\rceil$.
A formal description of the classical algorithm for finding the isometry circuit and bijection is given in Algorithm~\ref{alg:isometry_finding}, with an example circuit shown in Figure~\ref{fig:isometry}.

If there are states not yet in the subspace, we start building a new batch by finding a state with the first qubit of the non-subspace register set to $\ket{1}$, potentially swapping qubits to do so.
Using a multi-target-CX gate, this state is transformed into $\ket{k}^s\ket{e_0}^r$, where $k$ is a runner variable that is initialized to zero at the beginning of the algorithm and incremented with each element added to a batch.
The resulting state then constitutes the first element of the batch.
In this way, the foremost multi-target CX in Figure~\ref{fig:isometry} transforms the state $\ket{110}^s\ket{1011}^r$ into $\ket{000}^s\ket{1000}^r$.
If there are still states not in the subspace or the batch, and the batch is not full, we take a state with the second qubit of the non-subspace register set to $\ket{1}$, assuming for now that such a state exists, and apply a multi-target CX controlled on this $\ket{1}$ to yield the second batch element $\ket{k+1}^s\ket{e_1}^r$.
Repeating the process for the third up to the $m$-th qubit of the non-subspace register, or until all states are either in the batch or in the subspace, we end up with the batch $(\ket{k}^s\ket{e_0}^r, \dots, \ket{k+m-1}^s\ket{e_{m-1}}^r)$.
Then we use a partial unary iteration circuit PUI$_k^{k+m-1}$ over the address interval $[\ket{k}, \ket{k+m-1}]$ to implement the mapping \eqref{eq:zeroing}, zeroing out all qubits of the batch-states outside the subspace register.
If we do not fill up the batch completely, $m$ needs to be replaced by the number of elements in the batch.
The circuits for the partial unary iteration are given in Section~\ref{sec:partial_unary_iteration}.
They leave all states~$\ket{j}^s\ket{0}^r$ for $j < k$ unchanged; however if the unrestricted version of PUI is used, states~$\ket{j}^s\ket{0}^r$ with $j \geq k+m$ may be altered, and the tableau must keep track of those changes.
In the algorithm, either the restricted or the unrestricted version of PUI may be used; however, unless stated otherwise, we use unrestricted PUIs because of their improved Toffoli cost.
After the zeroing is done, the batch is finished, and we start building a new one.

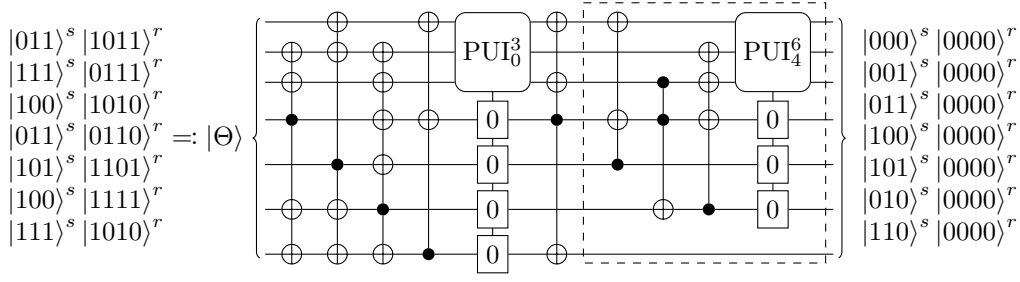
\begin{figure}[t]
    \centering
    \begin{tikzpicture}
        \begin{yquant*}[every control/.append style={radius=0.8mm}]
            qubit {} reg[7];
            init {$\begin{matrix}
            \ket{011}^s\ket{1011}^r\\
            \ket{111}^s\ket{0111}^r\\
            \ket{100}^s\ket{1010}^r\\
            \ket{011}^s\ket{0110}^r\\
            \ket{101}^s\ket{1101}^r\\
            \ket{100}^s\ket{1111}^r\\
            \ket{111}^s\ket{1010}^r
            \end{matrix}\eqcolon \ket{\Theta}$} (-);
            cnot reg[1-2], reg[5-6] | reg[3];
            cnot reg[5-6], reg[0-1] | reg[4];
            cnot reg[6], reg[1-4] | reg[5];
            cnot reg[0], reg[3] | reg[6];
            [name=one, operator style={only at={0}{rounded corners}}] box {\Ifnum\idx<1 PUI$_0^3$\Else $0$\Fi} (reg[0-2]), reg[6], reg[5], reg[4], reg[3];
            cnot reg[0], reg[6], reg[2] | reg[3];
            [this subcircuit box style={draw, dashed}]
                subcircuit {
                qubit {} reg[7];
                cnot reg[0], reg[3] | reg[4];
                cnot reg[5] | reg[3], reg[2];
                cnot reg[1-3]| reg[5];
                [name=two, operator style={only at={0}{rounded corners}}] box {\Ifnum\idx<1 PUI$_4^6$\Else $0$\Fi} (reg[0-2]), reg[5], reg[4], reg[3];
                \draw (two-0) -- (two-3) -- (two-2) -- (two-1);
                } (-);
            output {$\;\begin{matrix}
            \ket{000}^s\ket{0000}^r\\
            \ket{001}^s\ket{0000}^r\\
            \ket{011}^s\ket{0000}^r\\
            \ket{100}^s\ket{0000}^r\\
            \ket{101}^s\ket{0000}^r\\
            \ket{010}^s\ket{0000}^r\\
            \ket{110}^s\ket{0000}^r
        \end{matrix}$} (-);
        \end{yquant*}
        \draw (one-0) -- (one-4) -- (one-3) -- (one-2) -- (one-1);
    \end{tikzpicture}
    \caption{Our isometry circuit mapping $\ket{\Theta}$ to a $3$-qubit subspace. First, we create a batch $(\ket{0}^s\ket{e_0}^r, \dots, \ket{3}^s\ket{e_3}^r)$ using multi-controlled-CX gates and apply an unrestricted partial unary iteration PUI$_0^3$ with control interval $[\ket{0}^s, \ket{3}^s]$ to zero the qubits of the batch-states in the non-subspace register. Throughout the paper, we use rounded corners on boxes to emphasize that the qubits entering and exiting the box are not changed but merely serve as control qubits for the other parts of the operation. We repeat the procedure for the second batch $(\ket{4}^s\ket{e_0}^r, \dots, \ket{6}^s\ket{e_2}^r)$. The circuits for the PUIs are given in Figure~\ref{fig:unrestricted_pui}. Moreover, the creation of the second batch within the dotted box is analyzed in detail in Figure~\ref{fig:batch}.}
    \label{fig:isometry}
\end{figure}

\begin{figure}[b]
    \centering
    \begin{adjustbox}{width=\textwidth}
        $\begin{matrix}
                \ket{100}^s\ket{1000}^r\\
                \ket{001}^s\ket{1100}^r\\
                \ket{101}^s\ket{1000}^r\\
            \end{matrix}
            \xrightarrow[]{\text{CMX}_4^{0,3}}
            \begin{matrix}
                \textcolor{purple}{\ket{100}^s\ket{1000}^r}\\
                \ket{101}^s\ket{0100}^r\\
                \textcolor{purple}{\ket{101}^s\ket{1000}^r}\\
            \end{matrix}
            \xrightarrow[]{\text{Tof}_{2,3}^{5}}
            \begin{matrix}
                \ket{100}^s\ket{1000}^r\\
                \ket{101}^s\ket{0100}^r\\
                \ket{101}^s\ket{1010}^r\\
            \end{matrix}
            \xrightarrow[]{\text{CMX}_5^{1,2,3}}
            \begin{matrix}
                \ket{100}^s\ket{1000}^r\\
                \ket{101}^s\ket{0100}^r\\
                \ket{110}^s\ket{0010}^r\\
            \end{matrix}
            \xrightarrow[]{\text{PUI}_{4}^{6}}
            \begin{matrix}
                \ket{100}^s\ket{0000}^r\\
                \ket{101}^s\ket{0000}^r\\
                \ket{110}^s\ket{0000}^r\\
            \end{matrix}$
    \end{adjustbox}
    \caption{Creation of the second batch of the example in Figure~\ref{fig:isometry} (dotted box). The first state is already in the correct form $\ket{4}^s\ket{e_0}^r$; for the second, we use a multi-target-CX gate (CMX) to bring it into batch form. Hereby, the upper indices denote target qubits, while the lower indices denote the control qubits. The last state requires a Toffoli gate controlled on the third qubit, where the purple states differ, and on the first subspace qubit, where they coincide. Subsequently, the last state is transformed so that the states constitute a full batch. The non-subspace register of the batch elements is then set to zero via an unrestricted partial unary iteration.}
    \label{fig:batch}
\end{figure}

Looking at the example of Figure~\ref{fig:isometry}, four of the basis states of $\ket{\Theta}$, constituting the first batch, are mapped into the subspace and there remain three states still to be mapped.
In Figure~\ref{fig:batch} we give a detailed description of the creation of the new batch, starting with the process of adding the second element.
Above, we assumed that with $(l-1)$ elements in the batch there is a state with the $l$-th qubit of the non-subspace register set to $\ket{1}$.
As can be seen in the example, this is not always the case, and we need to create such a state without altering the ones in the subspace and the batch.
If there is a state with a $\ket{1}$-qubit in the non-subspace register further right than the $l$-th qubit, a simple swap suffices.
Otherwise, we pick any $\ket{C}=\ket{C'}^s\ket{C''}^r$ not in the batch or the subspace and note that $\ket{C''}^r$ has a $\ket{1}$ within the first $(l-1)$ qubits, e.g., qubit $j$.
By construction, there is a state $\ket{r}^s\ket{e_j}^r$ in the batch that must differ from $\ket{C}$ at some qubit $u$.
In Figure~\ref{fig:batch} this state and $\ket{C}$ are marked in purple.
If $\ket{C}$ has a $\ket{1}$ at qubit $u$ we apply a Toffoli gate with controls $u$ and $j$ and the $l$-th non-subspace qubit as the target, creating the desired state with the $l$-th non-subspace qubit in $\ket{1}$.
If qubit $u$ of $\ket{C}$ is $\ket{0}$ instead, we let the Toffoli be negatively controlled on this qubit.

The bijection can be read off from the final tableau.
Zeroing the elements of the final batch requires some more care.
If there are no more elements outside the subspace, there are two cases.
If each state within the tableau has been in a batch, we can simply apply the zeroing operation to the current batch elements as usual.
However, if states lie in the subspace without having been part of a batch, we must make sure that emptying the current batch does not map those elements out of the subspace.
We do so by using a restricted partial unary iteration for the zeroing operation in this case.
If, during this operation, with the last element in the batch being $\ket{r}^s\ket{e_j}^r$, a non-batch state $\ket{r}^s\ket{0}^r$ is present, one ends up with the non-subspace state $\ket{r}^s\ket{e_j}^r$ after the zeroing.
To fix this and map the state into the subspace, we use a single step of the Malvetti et al. algorithm.

The Toffoli cost of the resulting circuit is the cost of the partial unary iterations plus at most
\begin{equation}
\label{eq:algo}
s - \lceil s/m \rceil + (\log(s) -1)
\end{equation}
Toffolis.
Note that the negative term is a consequence of the fact that the first element in each batch is guaranteed not to require a Toffoli.
Moreover, only for the partial unary iteration circuits ancilla qubits are necessary.
In numerical examples on random states (cf. Figure~\ref{fig:isometry_improvement}), we observe that the actual cost bounded by \eqref{eq:algo} is considerably lower, because suitable non-zero columns exist most of the time.
When it comes to the multi-target CX gates in the circuit, it is worth mentioning that, using lattice surgery within a surface code, CX gates with multiple targets do not take longer than single-target CX gates~\cite{Litinski2018}.

For the classical cost of the algorithm, each basis state requires the tableau to be updated at most three times: once after the zeroing operation and at most twice while building the batch.
Each of those updates may be performed at a cost linear in the tableau size, i.e., in $O(sn)$, and can be parallelized.
Since, for all other operations, such as finding states not yet in the batch, it also suffices to scan the tableau once, and we have a constant number of such scans per basis state, the classical asymptotic cost of the algorithm is $O(s^2 n)$.
In Figure~\ref{fig:numerics_isometry} (right), we collect some wall-clock times for our implementation~\cite{rupprecht2025} of the algorithm in Rust using SIMD intrinsics.
On a CPU node with two AMD EPYC 7452 processors, the algorithm for $s=10^7$ non-zero basis states runs in about 10-20 minutes.
Making profitable use of multi-threading for the algorithm is non-trivial, because each individual parallelizable row operation is very cheap.
We expect there to be room for improvement in the run-time of the classical algorithm, potentially by using GPUs.

\section{Partial unary iteration}
\label{sec:partial_unary_iteration}

In this section, we construct the partial unary iteration circuits (PUI) used in Section~\ref{sec:isometry_circuit} and analyze their cost.
Given an interval of states $[\ket{l}^a, \ket{r}^a]$ within an $w$-qubit address register, the goal is to apply unitaries $U_l, \dots, U_r$ on a target register controlled on the values in the address register; i.e., we want to find a circuit $U$ implementing
\begin{equation}
\label{eq:mapping}
U (\ket{i}^a\otimes \ket{t}^t) = \ket{i}^a\otimes(U_i \ket{t}^t)
\end{equation}
for all $i\in \{l,\dots,r\}$.

If $l=0$ and $r=2^w - 1$, this task is solved by standard unary iteration circuits based on the traversal of balanced binary trees~\cite{Babbush2018} or, alternatively, skew trees~\cite{Khattar2025}.
Both approaches are nicely depicted in~\cite{Khattar2025}.
The following discussion is based on the original unary iteration circuits~\cite{Babbush2018} that utilize balanced binary trees.
The Toffoli cost of this approach is $2^w - 2$ (or $2^w - 1$ if controlled) and requires $w-1$ ancillas.
An example of such a full unary iteration circuit for $w=3$ is given in Figure~\ref{fig:restricted_pui}, where we chose $U_0, \dots, U_6$ to be X gates on different qubits of the target register and $U_7$ to be the identity.

\begin{figure}[t]
    \centering
    \begin{adjustbox}{width=\textwidth}
        \begin{tikzpicture}
            \begin{yquant*}[every control/.append style={radius=0.8mm}]
                qubit {} enum[3];
                nobit k;
                nobit l;
                qubit {} rest[4];
                [every negative control/.append style={only at={0}{draw=purple}}]
                init k ~ enum[0-1];
                [name=start]
                init l | k ~ enum[2];
                cnot rest[0] | l;
                cnot l | k;
                cnot rest[1] | l;
                [name=end]
                discard l | k ~ enum[2];
                [every negative control/.append style={only at={0}{draw=purple}}]
                cnot k  ~ enum[0];
                init l | k ~ enum[2];
                cnot rest[2] | l;
                cnot l | k;
                cnot rest[3] | l;
                discard l | k, ~ enum[2];
                [every negative control/.append style={only at={0}{draw=purple}}]
                discard k ~ enum[0-1];
                barrier -;
                init k | enum[0] ~ enum[1];
                init l | k ~ enum[2];
                cnot rest[0] | l;
                cnot l | k;
                cnot rest[1] | l;
                discard l | k ~ enum[2];
                cnot k | enum[0];
                [every negative control/.append style={only at={0}{draw=purple}}]
                [name=start2]
                init l | k ~ enum[2];
                cnot rest[2] | l;
                cnot l | k;
                [name=end2]
                [every negative control/.append style={only at={0}{draw=purple}}]
                discard l | k ~ enum[2];
                discard k | enum[0] ~ enum[1];
            \end{yquant*}
            \draw[dashed] ([yshift=0.5em, xshift=-0.75em]start-n0.north) -- ([yshift=0.5em, xshift=0.75em]end-n0.north) -- ([yshift=-0.6em, xshift=0.75em]end-0.south) -- ([yshift=-0.6em, xshift=-0.75em]start-0.south) -- ([yshift=0.5em, xshift=-0.75em]start-n0.north);
            \draw[dashed, purple] ([yshift=0.5em, xshift=-0.75em]start2-n0.north) -- ([yshift=0.5em, xshift=0.75em]end2-n0.north) -- ([yshift=-0.6em, xshift=0.75em]end2-0.south) -- ([yshift=-0.6em, xshift=-0.75em]start2-0.south) -- ([yshift=0.5em, xshift=-0.75em]start2-n0.north);
        \end{tikzpicture}
    \end{adjustbox}
    \caption{Full unary iteration circuit~\cite{Babbush2018} applying X gates to different qubits of the lower 4-qubit target register, controlled on the values of the address register. The left and right parts of the circuit can be seen as two restricted partial unary iteration circuits on the values $[\ket{0}^a, \ket{3}^a]$ and $[\ket{4}^a, \ket{6}^a]$, respectively. In order to obtain the unrestricted versions of those PUIs shown in Figure~\ref{fig:unrestricted_pui}, one removes the purple control dots. The purple dashed box is then simplified to the box shown in Figure~\ref{fig:unrestricted_pui}. The definition of the AND gates is recalled in Appendix~\ref{sec:logical_and}.}
    \label{fig:restricted_pui}
\end{figure}
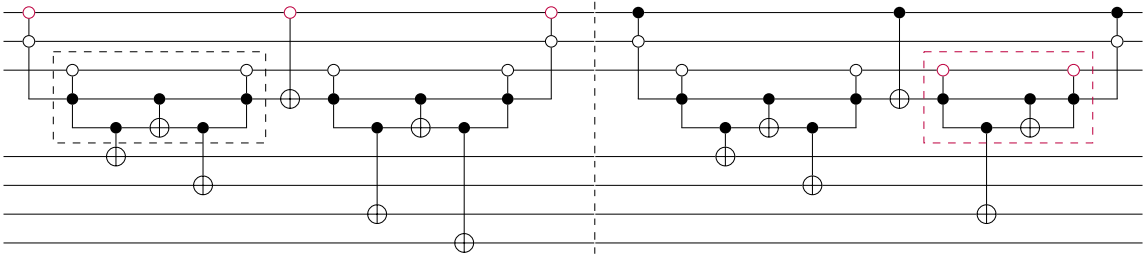

The core primitive of unary iteration, corresponding to the splitting of a node of the tree into its two children, may be seen in the dashed boxes within the figure.
For now, we concentrate on the left box with the black dashes.
Entering the box are the final address qubit and an ancilla qubit created earlier by the AND gate (cf. Appendix~\ref{sec:logical_and}) on the first two address qubits.
The ancilla is in state $\ket{1}$ if those two address qubits are in state $\ket{00}$.
With our big-endian convention, this is the case if the $3$-qubit state in the address register lies in the interval $[\ket{0}^a, \ket{1}^a]$.
In general, the incoming ancilla qubit 'selects' an interval of address register values that corresponds to a node in the binary tree.
Within the box, applying an AND gate controlled on the ancilla and negatively controlled on the incoming address qubit, the ancilla qubit created by this new AND is $\ket{1}$ if the address value lies in the left half of the interval, i.e., in our case if the address register is in the state $\ket{0}^a$.
This left interval corresponds to one child of the node.
In our case, it is a leaf node, and we can apply $U_0$ controlled on the ancilla qubit.
In the general case, if the half-interval at that point contains more than a single element, one recursively applies the procedure we just described until the interval has length one.
When doing so, one would take the current ancilla and the next address qubit as the incoming qubits of the next box.
In order for the ancilla in the current box to indicate that the address value lies in the right half-interval, i.e., in our case to be $\ket{1}^a$, it suffices to toggle it controlled on the prior ancilla.
This half-interval corresponds to the other child node, and since it is a leaf node we can apply $U_1$ controlled on the ancilla and uncompute the AND gate.
Note that the adjoint AND may be realized at zero Toffoli cost using measurement-based uncomputation.

If one now wants to restrict the iteration to a target interval of values $[\ket{l},\ket{r}]$, one can simply limit the traversal of the balanced binary tree to values within that interval, i.e., at each step we continue the recursion on the left/right half of the current interval only if the respective half-interval overlaps with the target interval.
Applying $U_l, \dots, U_{r}$ when reaching the corresponding leaf nodes then implements a unitary satisfying equation \eqref{eq:mapping}.
We choose to call the resulting circuit \emph{restricted partial unary iteration}.
Examples for such restricted PUIs are the left and right parts of the circuit in Figure~\ref{fig:restricted_pui}.

In Section~\ref{sec:isometry_circuit}, multiple partial unary iteration circuits over successive intervals are used, with the target operations being X gates applied to individual qubits of the target register.
In situations like this, it is possible to reduce the Toffoli count of the iterations by allowing the unitary
\begin{equation*}
    U = \sum_{i=l}^{r} \ket{i}\bra{i}^a\otimes U_i + \sum_{i\in L} \ket{i}\bra{i}^a\otimes U_{g(i)}
\end{equation*}
to have a non-trivial effect on address-register values that are greater than the upper limit $r$ of the interval we iterate over.
Here, $L \subseteq \{r+1, \dots, 2^w-1\}$ are the values of the address-register elements greater than $r$ for which $U$ does not act trivially on the target register.
The unitary $U_{g(i)}$ applied for those values is determined by the map $g\colon L \to \{l, \dots, r\}$.
In the following, we construct a circuit PUI$_l^r$ implementing such a unitary and call it \emph{unrestricted partial unary iteration}.

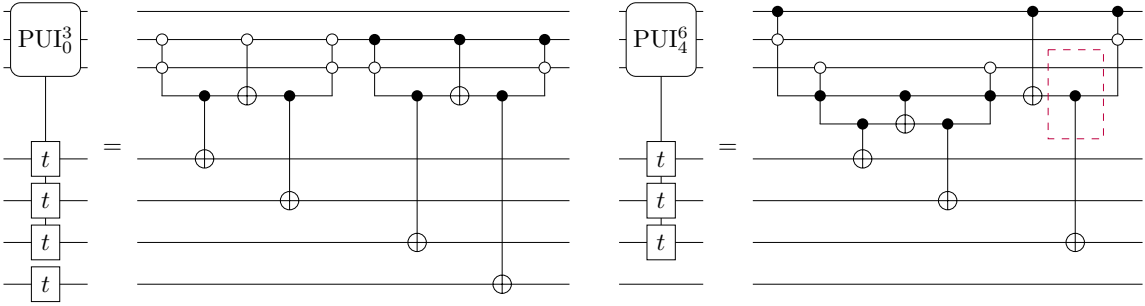
\begin{figure}[t]
    \centering
    \begin{adjustbox}{width=\textwidth}
        \begin{tikzpicture}
            \begin{yquantgroup}[every control/.append style={radius=0.8mm}, drawing mode=size]
                \registers {
                    qubit {} enum[3];
                    nobit k;
                    nobit l;
                    qubit {} rest[4];
                }
                \circuit {
                    [name=a, operator style={only at={0}{rounded corners}}] box {\Ifnum\idx<1 PUI$_0^3$\Else $t$\Fi} (enum[0-2]), rest[0-3];
                    \draw (a-0) -- (a-1) -- (a-2) -- (a-3);
                }
                \equals
                \circuit {
                    init k ~ enum[1-2];
                    cnot rest[0] | k;
                    cnot k | ~ enum[1];
                    cnot rest[1] | k;
                    discard k ~ enum[1-2];
                    init k | enum[1] ~ enum[2];
                    cnot rest[2] | k;
                    cnot k | enum[1];
                    cnot rest[3] | k;
                    discard k | enum[1] ~ enum[2];
                }
                \equals[$\;\;\;$]
                \circuit {
                    [name=b, operator style={only at={0}{rounded corners}}] box {\Ifnum\idx<1 PUI$_4^6$\Else $t$\Fi} (enum[0-2]), rest[0-2];
                    \draw (b-0) -- (b-1) -- (b-2) -- (b-3);
                }
                \equals
                \circuit {
                    init k | enum[0] ~ enum[1];
                    init l | k ~ enum[2];
                    cnot rest[0] | l;
                    cnot l | k;
                    cnot rest[1] | l;
                    discard l | k ~ enum[2];
                    [name=start2]
                    cnot k | enum[0];
                    [name=end2]
                    cnot rest[2] | k;
                    discard k | enum[0] ~ enum[1];
                }
            \end{yquantgroup}
            \draw[dashed, purple] ([yshift=0.5em, xshift=5.1em]start2-n0.north) -- ([yshift=0.5em, xshift=2.2em]end2-n0.north) -- ([yshift=-0.6em, xshift=2.2em]end2-0.south) -- ([yshift=-0.6em, xshift=5.1em]start2-0.south) -- ([yshift=0.5em, xshift=5.1em]start2-n0.north);
        \end{tikzpicture}
    \end{adjustbox}
    \caption{The two unrestricted partial unary iteration circuits on the intervals $[\ket{0}^a, \ket{3}^a]$ and $[\ket{4}^a, \ket{6}^a]$ used in Figure~\ref{fig:isometry}. They can be obtained from the restricted versions on the left and right-hand sides of Figure~\ref{fig:restricted_pui} via simplifications after removing the purple controls. These removals do not change how the target-X unitaries are applied on address-register values in the current interval, leaving states with address values left of the interval untouched. However, the first target‑X gate in PUI$_0^3$ is applied not only for the address value $\ket{0}^a$ but also for $\ket{4}^a$, i.e., a value outside the interval of the PUI, where the most‑significant bit changed from $0$ to $1$. The same is true for all other target‑X gates in the first batch. }
    \label{fig:unrestricted_pui}
\end{figure}

Even though basis states $\ket{j}^a\ket{v}^t$ with $j>r$ can change, states with $j<l$ are left untouched.
The application of the unrestricted form of partial unary iteration makes sense in situations where one either knows that the state to which it is applied does not have overlap with address-register components greater than the interval, or when one iterates over the full address-register values from left to right and can undo unwanted effects on a value once it lies within an active interval.
For the latter, one must be able to efficiently keep track of these unwanted effects.
This is the case in our isometry circuits, where they are simple bit‑flips in the tableau of the classical algorithm.
The two unrestricted PUI-circuits used in the isometry of Figure~\ref{fig:isometry} are plotted in Figure~\ref{fig:unrestricted_pui}.

A formal description of the classical algorithm for creating the unrestricted partial unary iteration circuit implementing $U$ and determining the map $g$ is given in Algorithm~\ref{alg:partial_unary_iteration}.
It works similarly to the restricted version described above by recursively traversing parts of the same binary tree.
However, we change the branching logic at a node depending on the overlaps of the target interval with the two half-intervals.
If both the left and right half-intervals intersect with the target interval, we use the same branching primitive as explained above.
In case only the right half-interval intersects, we apply the box primitive, but only continue the recursion on the right half-interval.
This assures that the partial unary iteration acts trivially on address-register content that is smaller than the target interval, as the new ancilla of the box is set only if the address values are within the right half-interval.
Finally, if the left half-interval intersects with the target interval, but the right does not, we do not employ the box primitive.
We ignore the address qubit that would go into the box, and instead continue the recursion on the left half-interval, i.e., the input qubit of the next box would be the current ancilla and the next address qubit.
With this change, the target unitary for values in the left half-interval is not only applied for those values, but also for values where the $0$ in the address word corresponding to the ignored address qubit is replaced by~$1$.
This means that we apply target unitaries at address states greater than the target interval.
The change from the box primitive in the restricted PUI to the unrestricted primitive can be seen in the purple boxes of Figures~\ref{fig:restricted_pui} and~\ref{fig:unrestricted_pui}, where the node corresponding to the left half-interval is a leaf.
Once such a leaf, corresponding to the address-state $\ket{j}^a$, is reached, the elements we need to add to $L$ can be inferred from $j$ by going through all nodes in the path to the leaf where we ignored the address qubit and taking all combinations of $0$ and $1$ at those address positions such that at least one of them is $1$.
All those elements then get mapped to $j$ under $g$.

It remains to determine the resources required for implementing the restricted and unrestricted partial unary iteration circuits.
The number of ancilla qubits is at most $w-1$ in both cases.
When it comes to the Toffoli gate count, we do not count the gates required for a single partial unary iteration circuit, but concentrate on the situation in the isometry circuits instead.
We assume to successively iterate over intervals of size $m\coloneqq 2^l$ for some $l\leq w$, starting from $\ket{0}^a$.
We restrict to this case because it is more efficient to move in intervals whose size is a power of two than to use arbitrary interval lengths; the latter requires traversing the same tree node twice in certain situations, leading to increased Toffoli cost.
As the circuits take a very simple form, the situation is also more easily analyzed than the case of general restricted and unrestricted partial unary iterations.

We start with the case where the successive partial unary iterations cover all possible address-register values, i.e., there are $2^{w-l}$ intervals.
Each interval may be understood as selecting some value on the first $w-l$ address qubits and then doing a full controlled unary iteration on the remaining $l$ qubits, the latter requiring $m-1$ Toffolis (resp. $m-2$ in the uncontrolled case) per interval.
In the case of the restricted partial unary iterations, selecting a value on the first $w-l$~qubits requires $w-l-1$~ANDs, each with a cost of a single Toffoli.
In total, one thus has a Toffoli cost for the restricted partial unary iteration of
\begin{equation*}
    \frac{S}{m}\left(m +\log(S/m)-2\right)
\end{equation*}
with $S\coloneqq 2^w$.

For the unrestricted case, notice that Toffolis are required for selecting a state on the first $w-l$~qubits if and only if there are at least two $\ket{1}$-qubits among them.
For such a state with $q$~qubits in state $\ket{1}$, the number of required Toffolis is $q-1$.
There are $2^{w-l}(w-l)/2$ occurrences of $\ket{1}$-qubits within the first $w-l$ qubits in all possible states, $w-l$ of which belong to states with exactly one $\ket{1}$.
Hence, there are a total of $2^{w-l-1}(w-l) - (w-l)$ relevant occurrences of $\ket{1}$ within $2^{w-l} - (w-l) - 1$ states, yielding
\begin{equation}
\label{eq:toff}
    2^{w-l}\left(\frac{w-l}{2} - 1\right) + 1
\end{equation}
Toffolis for selecting on the first $w-l$ address qubits.
This results in a total cost for the unrestricted case of
\begin{equation*}
    \frac{S}{m}\left(m +\frac{\log(S/m)}{2}-2\right)
\end{equation*}
Toffolis.

So far, we assumed that the successive intervals cover all of the $S=2^w$ values of the address register.
However, in the isometry circuit of Section~\ref{sec:isometry_circuit}, it is only iterated over $s\leq S$ elements, which need not be a power of two.
In this case, assuming $2^{w-1}<s<2^w=S$, a simple inductive argument in Appendix~\ref{sec:proof} shows that the number of Toffolis required for selecting on the first $w-l$~qubits in the unrestricted partial unary iteration may be upper bounded by
\begin{equation}
\label{eq:bound}
    \biggl \lceil \frac{s}{m} \biggl \rceil \left(\frac{w-l}{2} - 1\right) + 1.
\end{equation}
The argument is based on the observation that selecting intervals closer to $s$ contributes more Toffolis than for intervals closer to zero.
For the full unrestricted partial unary iteration, one thus gets the bound
\begin{equation*}
    \biggl \lceil \frac{s}{m} \biggl \rceil \left(m +\frac{\log(S/m)}{2}-2\right).
\end{equation*}
Recalling that in the isometry setup $w=\lceil\log(s)\rceil$ and therefore $S=\tilde{s}$, equation \eqref{eq:algo} and the fact that the last individual PUI may be required to be restricted, leads to the bound \eqref{eq:toffoli_cost}.
The change in Toffoli cost in case the last PUI needs to be restricted is upper bounded by $\log(S/m)$.
For the restricted case, it is easy to see that the total PUI cost is bounded by
\begin{equation}
\label{eq:final_count_pui_restricted}
    \biggl \lceil \frac{s}{m} \biggl \rceil \left(m +\log(S/m)-2\right).
\end{equation}
We note that for small values of $s/m$ the bounds can be rather loose.
As a result, considering the bounds as functions of $m$, there is a value at which the upper bound is minimal, after which it becomes larger again.
This is in contrast to the actual Toffoli count, which is monotonically decreasing.

\section{Optimizing the full circuit}
\label{sec:further_optimizations}

So far, we have only looked at how to best implement the isometry step in the sparse state preparation scheme of Figure~\ref{fig:preparation_circuit}, reducing the worst-case Toffoli cost of this step from $s(\lceil{\log(s)}\rceil -1)$ in~\cite{Malvetti2021} to $\lceil \frac{s}{m}\rceil(2m +\frac{1}{2}\log(\tilde{s}/m)-3) + \log(\tilde{s}^2/m)$, where $m$ is the largest power of two less than or equal to the number of qubits left from the $n$-qubit main register after the dense state is prepared, i.e., $m \leq n - \lceil\log(s)\rceil$.
For implementing the isometry, $\lceil{\log(s)}\rceil - 1$ ancilla qubits are needed.
These ancillas can obviously also be used for the dense state preparation, which means that $n + \lceil{\log(s)}\rceil - 1$ qubits are available for this step without increasing the total qubit count of the overall sparse state preparation routine.
More globally, state preparation is usually a subroutine at the beginning of a larger quantum circuit, and the number of available qubits is determined by the later parts of the circuit.
In this section, we will therefore consider how to best implement the combined sparse state preparation steps with varying numbers of ancilla qubits.

We start by briefly explaining the Grover-Rudolph strategy for dense state preparation, as depicted in Figure~\ref{fig:dense_prep} in its recent form~\cite{berry2024, Low2024}, where the rotations are implemented with the help of a phase-gradient state $\ket{\psi}$.
Given the dense state $\ket{\theta} = \sum_{i=0}^{2^l - 1} c_i \ket{f(i)}$ on $l\coloneqq \lceil\log(s)\rceil$ qubits, a two-step process is applied, the first of which prepares the correct amplitudes $\ket{\theta'} = \sum_{i=0}^{2^l - 1} |c_i| \ket{f(i)}$ by a series of $l-1$ sequential $R_y$-rotations, followed by the second step (dotted box in Figure~\ref{fig:dense_prep}) that uses an $R_z$-rotation to encode the correct phases of $\ket{\theta}$.
Here, the $i$-th phase-gradient rotation together with the preceding QRO(A)M$_i$~\cite{Low2024, Babbush2018, Berry2019} may be seen as a family of controlled rotations with angles $(\phi_i^j)_{j\in [2^i]}$ where the rotation angle $\phi_i^j$ gets applied if the state in the register consisting of qubits $0,\dots, i-1$ is $\ket{j}$.

\begin{figure}[b]
    \centering
    \begin{adjustbox}{width=\textwidth}
    \begin{tikzpicture}
        \begin{yquant}[/yquant/register/separation=1.5mm]
            qubit {} state[3];
            qubit {$\ket{0}^r$} b;
            qubit {$\ket{\psi}$} psi;
            nobit anc;

            init {$\ket{0}$} (state[0-2]);
            ["north:$b$" {font=\protect\footnotesize, inner sep=1pt}]
            slash b;
            ["north:$b$" {font=\protect\footnotesize, inner sep=1pt}]
            slash psi;
            box {$\phi_0$} b;
            [name = i0, operator style={only at={1,2}{rounded corners}}] box {\Ifcase\idx\space$R_y(\phi_0)$\Or$\phi_0$\Else$\psi$\Fi} state[0], b, psi;
            box {M$_X$} b;
            [name = i1l, operator style={only at={0}{rounded corners}}] box {\Ifnum\idx<1{\rotatebox{90}{QRO(A)M$_1$}}\Else$\phi_1^{0,1}$\Fi} state[0], b;
            [name = i1, operator style={only at={1,2}{rounded corners}}] box {\Ifcase\idx\space$R_y(\phi_1^{0,1})$\Or$\phi_1^{0,1}$\Else$\psi$\Fi} state[1], b, psi;
            box {M$_X$} b;
            [name = i2l, operator style={only at={0}{rounded corners}}] box {\Ifnum\idx<1{\rotatebox{90}{QRO(A)M$_2$}}\Else$\phi_2^{0,\dots,3}$\Fi} (state[0-1]), b;
            [name = i2, operator style={only at={1,2}{rounded corners}}] box {\Ifcase\idx\space$R_y(\phi_2^{0,\dots,3})$\Or$\phi_2^{0,\dots,3}$\Else$\psi$\Fi} state[2], b, psi;
            box {M$_X$} b;
            [this subcircuit box style={draw, dashed}]
            subcircuit {
                qubit {} state[3];
                qubit {} b;
                qubit {} psi;
                nobit anc;
                align -;
                init {$\ket{1}$} anc;
                [name = i4l, operator style={only at={0}{rounded corners}}] box {\Ifnum\idx<1{\rotatebox{90}{QRO(A)M$_3$}}\Else$\phi_3^{0,\dots,7}$\Fi} (state[0-2]), b;
                [name = i4, operator style={only at={0,1}{rounded corners}}] box {\Ifnum\idx<1$\phi_3^{0,\dots,7}$\Else{\Ifnum\idx<2$\psi$\Else$R_z(\phi_3^{0,\dots,7})$\Fi}\Fi} b, psi, anc;
                box {M$_X$} b;
                text {$\ket{1}$} anc;
                discard anc;
                \draw (i4-0) -- (i4-1) -- (i4-2);
                \draw (i4l-1) -- (i4l-0);
            } (-);
            box {\rotatebox{90}{SignFixes}} (state); 
            output {$\ket{\psi}$} psi;
            output {$\ket{0}^r$} b;
            output {$\ket{\theta}$} (state);
        \end{yquant}
        \draw (i0-0) -- (i0-1) -- (i0-2);
        \draw (i1-0) -- (i1-1) -- (i1-2);
        \draw (i2-0) -- (i2-1) -- (i2-2);
        \draw (i1l-1) -- (i1l-0);
        \draw (i2l-1) -- (i2l-0);
    \end{tikzpicture}
    \end{adjustbox}
    \caption{Grover–Rudolph type dense-state preparation circuit on three system qubits, inspired by~\cite{berry2024, Low2024}. The rotation angles $\phi_0, \phi_1^{0,1}, \dots, \phi_3^{0,\dots,7}$ are loaded into an angle register that has the same size $b$ as the register that holds the phase-gradient state $\ket{\psi}$. The states in those two registers are then used as resource states for applying $R_y$ and $R_z$ rotations via the phase-gradient method~\cite{Low2024, Sanders2020}. The first angle $\phi_0$ is encoded using X gates. To load the angles that depend on the contents of the previously rotated main‑register qubits, QRO(A)Ms are used. After a rotation, we measure the angle-register in the $X$-basis (M$_X$), which yields a clean $\ket{0}^r$ and introduces $\pm1$ phases on the basis states of the main register; these phases can be inferred from the measurement results. While the first three rotations are required to set the correct amplitudes of the target state, the last rotation in the dashed box prepares the correct phases. At the end, we fix the signs. As before, we round the corners of gates whose qubits are not changed by the gate.}
    \label{fig:dense_prep}
\end{figure}
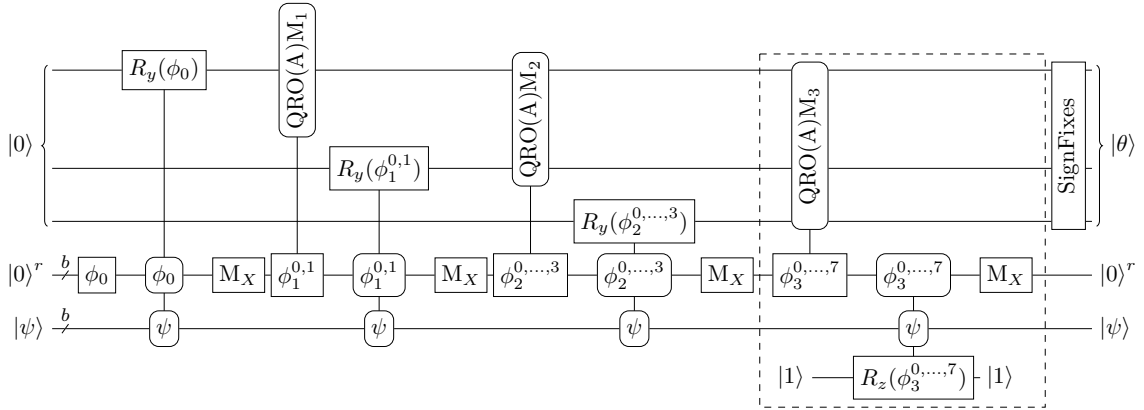

The rotations are done via the well-known phase-gradient method~\cite{Low2024, Sanders2020}, taking the $b$-qubit phase-gradient state $\ket{\psi}$ and a rotation angle $\phi$ stored in a $b$-qubit angle-register $\ket{\phi}^r$ to implement the mapping $\ket{\psi}\bra{\psi}\otimes \ket{\phi}\bra{\phi}^r \otimes R(\phi)$ via controlled addition/subtraction of the states in the angle- and phase-gradient registers.
For the $i$-th rotation, in order to apply different rotation angles $(\phi_i^j)_{j\in [2^i]}$ depending on the state $\ket{j}$ of the qubits $0,\dots, i-1$, one uses some form of QRO(A)M $\sum_{j\in [2^{i}]} \ket{\phi_i^j}\bra{0}^r \otimes \ket{j}\bra{j}$ to load $\ket{\phi_i^j}^r$ into the angle-register if the address state is $\ket{j}$.
By linearity, the following phase-gradient rotation then causes the desired rotation $\sum_{j\in [2^{i}]} \ket{j}\bra{j} \otimes R(\phi_i^j)$.
The register size $b$ is logarithmic in the accuracy of the stored rotation angles and in most practical examples is on the order of $20$.
The phase-gradient rotation itself requires $b$ Toffoli gates, which is negligible compared to the dominating QRO(A)Ms and thus neglected in this work.
It suffices to prepare the phase-gradient state $\ket{\psi}$ once, as it can be reused.
The cost for preparing $\ket{\psi}$ is therefore ignored in all our resource estimates, and we also exclude the phase-gradient register from our qubit counts.
We refer to~\cite{Low2024, Sanders2020} for the definition of the phase-gradient state $\ket{\psi}$ and for how to compute the rotation angles.

Instead of completely uncomputing the angle-register after each rotation as done in~\cite{Low2024} we follow~\cite{berry2024} and instead measure it in the X-basis.
After this measurement $M_X$ the angle register can be reset to $\ket{0}^r$.
The measurement induces $\pm 1$ phases on the main-register basis states which may be inferred from the measurement results and the angle-register content prior to the measurement.
If QROAM is used for angle-loading, the junk qubits of the QROAM are treated analogously.
After classically tracing the signs through the circuit, at the end we can apply sign fixes to the $l$-qubit target basis-states $\ket{0}, \dots, \ket{2^l - 1}$.
In~\cite{Berry2019}, it is shown how to do these sign corrections via a QROM-lookup table on $l-1$ qubits or, if additional clean ancillas are available, via a construction similar to QROAM.
The latter has Toffoli cost $2^{l-r} + 2^r$ at the cost of $2^r + (l-r)$ clean ancillas for any $r<l$.
We will see that a variation of our isometry algorithm, where, among other small changes, the unrestricted partial unary iterations are replaced by restricted ones, instead allows these sign-fixes to be done within the isometry at no extra cost besides the slightly more expensive restricted partial unary iterations.

Before explaining this in more detail, we review methods for loading the rotation angles with different Toffoli–qubit trade-offs, the simplest of which is the original QROM proposal of~\cite{Babbush2018}.
For an address register of size $k$, it requires $2^k-2$ Toffolis and $k-1$ clean ancillas, which, for the circuit type in Figure~\ref{fig:dense_prep}, incurs a total cost of
\begin{equation*}
    \sum_{k=2}^{l} (2^k - 2) = 2^{l+1} - 2l - 2 < 2\tilde{s}
\end{equation*}
Toffolis with $l-1$ ancilla qubits.
If the final sign-fix is done with a lookup table of size $l-1$, another $2^{l-1}-2$ Toffolis are needed, leading to a total count bounded by $2.5\tilde{s}$.

If one has more ancilla qubits at hand, the clean QROAM version of~\cite{Berry2019} may be used to bring down the Toffoli cost.
Combining QROM with controlled swaps, it allows loading $b$ bits of data from a $k$-qubit address register using $2^{k-r} + b(2^r-1)$ Toffolis at the cost of $b(2^r-1) + (k-r)$ ancillas for some $r<k$.
The Toffoli count is minimized for $2^r \approx \sqrt{2^k/b}$, with the cost being roughly $2\sqrt{b2^k}$, which means that with enough ancillas one can obtain almost a quadratic improvement in the number of Toffolis over the original QROM.
Assuming for simplicity that we chose the same $r$ for all angle-lookups, the overall Toffoli cost sums up to
\begin{equation*}
    \sum_{k=2}^{l} (2^{k-r} + b(2^r-1)) = b(l-1)(2^r-1) + (2^l-2) 2^{1-r} < 2\tilde{s}/2^{r} + b(\lceil\log(s)\rceil -1) (2^r -1)
\end{equation*}
respectively $3\tilde{s}/2^{r} + (b\lceil\log(s)\rceil-b +1)(2^r-1)$ if the sign-fix at the end is done with the the scheme of~\cite{Berry2019}.
The ancilla cost in both cases is $b(2^r-1) + (\lceil\log(s)\rceil - r)$.
Note that if, at each $k$, the value $2^r \approx \sqrt{2^k/b}$ is chosen optimally, one ends up with a leading-factor Toffoli cost of $2(1+\sqrt{2})\sqrt{2b\tilde{s}}$ without the the sign-fix.
Hence, it is possible to improve upon the dense state preparation with the original QROM variant almost quadratically.
Finally, note that $b(2^r-1)$ ancillas can be dirty~\cite{Low2024} if one can tolerate worse prefactors in the Toffoli cost.

There is another variant~\cite{berry2024} for loading the rotation angles for dense state preparation with an even better Toffoli cost using multiple angle-registers and controlled swaps.
Since the minimal number of ancilla qubits is not as controllable as in the clean QROAM approach above, we, however, decide not to analyse it further here.

Comparing the Toffoli costs of the isometry and dense-state-preparation steps, in case QROM is used for the latter, the costs are distributed relatively evenly, with the upper bound~\eqref{eq:toffoli_cost} for the isometry being roughly $2s$ and the dense-state-preparation requiring $2.5\,\tilde{s}$ Toffolis.
As seen in Figure~\ref{fig:isometry_improvement}, however, in practical examples the isometry cost is often close to $s$, making the QROM-based dense preparation step the dominant one.
The distribution of the Toffoli cost among the different parts of the sparse-state-preparation stack for this QROM-based variant can be seen on the leftmost bar in Figure~\ref{fig:numerics_whole_prep}.
Replacing QROM by QROAM with maximal $r$-value $r_{max}$ changes the picture, and as $r_{max}$ increases, the dense-state-preparation cost becomes increasingly negligible compared to the isometry cost.
Up to a certain $r_{max}$, which is mostly determined by $n$, the extra qubits required for QROAM do not significantly exceed the number of qubits needed for the isometry, and hence the Toffoli cost can be reduced without increasing the total qubit count.
As can be seen further to the right, beyond a certain point, trading additional ancilla qubits for fewer Toffolis in the dense-state-preparation gadget does not significantly improve the overall cost of sparse-state preparation.

Even though $n$ is often large enough that QROAM with sufficiently large $r$ values can be applied without excessive qubit overhead, it is sensible, especially when few qubits are available, to try to reduce the leading factors of the dense-state-preparation cost by other means.
We have seen that applying the sign fixes for uncomputation increases the leading prefactor of the Toffoli cost from $2$ to $3$ (or $2.5$ if standard QROM is used).
In Figure~\ref{fig:numerics_whole_prep}, this part of the cost is drawn in purple.
As described above, the primitive responsible for the cost is the sign-fix applied to the basis states according to the results of the $X$-basis measurements.
Since, in the worst-case cost analysis of our isometry circuit, we assume iterating over all basis states, it is natural to consider performing the sign fix-up on the fly within the PUIs during the isometry circuit by simply applying a Z gate to a $\ket{1}$ ancilla for basis states that require a phase fix.
In order to do so, we have to slightly change our isometry circuit and the algorithm that builds it, because Algorithm~\ref{alg:isometry_finding} does not guarantee iterating over all basis states.
Indeed, basis states might start in or be mapped into the subspace without being iterated over.
To rectify this problem, we use Algorithm~\ref{alg:isometry_finding_with_phase}.

\begin{figure}[t]
    \centering
    \includegraphics{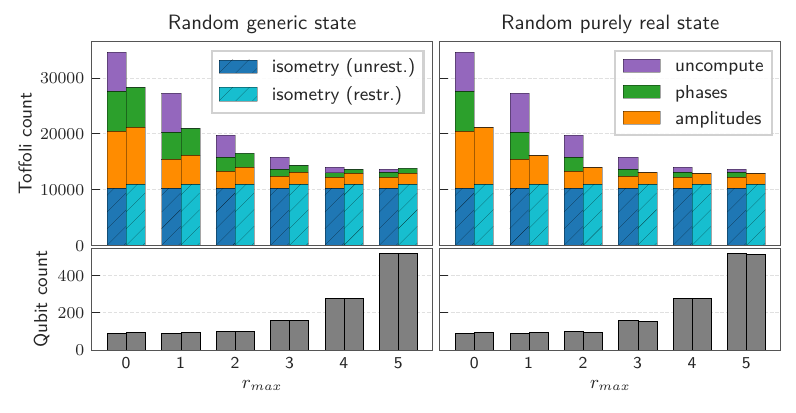}
    \caption{Toffoli and qubit counts of end-to-end sparse state preparation for a random complex (left) and purely real (right) $s$-sparse state on $n=80$~qubits for $s=10\,000$. We distinguish the individual contributions of the dense state preparation and consider different values of $r_{max}$, the maximum $r$ available for the QROAMs. The case $r_{max}=0$ corresponds to QROM. For the isometry both Algorithm~\ref{alg:isometry_finding} (restr.) and Algorithm~\ref{alg:isometry_finding} (unrest.) are applied. The resource count were calculated using Qualtran~\cite{harrigan2024}.}
    \label{fig:numerics_whole_prep}
\end{figure}

At the beginning of the algorithm, an extra ancilla qubit is set to $\ket{1}$.
Moreover, the unrestricted partial unary iteration is replaced by the restricted version, and the extra qubit of a basis state is set to $\ket{0}$ when a partial unary iteration iterates over the subspace-register content of that basis state.
If elements are in the subspace without having been part of a batch, we use the extra qubit to map them out of the batch and continue with the algorithm.
To see that this guarantees that all basis states are iterated over, note that if there are multiple states with the same subspace-register content and one of these states is part of a partial unary iteration that sets the extra qubit from~$\ket{1}$ to $\ket{0}$, then the same occurs for the others.
Each of those states has some qubit outside the subspace register and the extra qubit set to $\ket{1}$, and we must ensure that this remains true until the state is iterated over in a PUI where the sign is fixed.
One can easily check that this is indeed the case by noting that the only operation that can zero the non-subspace-register qubits of a state is the PUI.
One can either use an extra ancilla qubit set to $\ket{1}$ for applying the Z gate, or, by keeping track of the sign values throughout the algorithm, use one of the conditionally clean qubits~\cite{Khattar2025} that can be temporarily transformed to $\ket{1}$.
The effect on the Toffoli count of outsourcing the uncomputation to the isometry on a random state can be seen in Figure~\ref{fig:numerics_whole_prep} (left).

If the state to prepare, $\ket{\Theta}$, is real, the phases to encode in the last step of the dense-state-preparation circuit (dotted box in Figure~\ref{fig:dense_prep}; green boxes in Figure~\ref{fig:numerics_whole_prep}) are $\pm 1$ signs.
It is therefore possible to remove this step and, analogously to the sign fix-up, encode the phase information in the isometry.
Since the unrestricted partial unary iteration is replaced by a restricted one, the bound for the Toffoli cost of the adapted isometry circuit is slightly increased to
\begin{equation*}
\biggl \lceil \frac{s}{m} \biggl \rceil(2m +\log(\tilde{s}/m)-3)
\end{equation*}
as can be seen from equations \eqref{eq:algo} and \eqref{eq:final_count_pui_restricted}.
This estimate, together with all other estimates from the paper and some from the literature, may be found in Table~\ref{tab:costing}.

In practice, as can be seen for random states on the left in Figure~\ref{fig:numerics_isometry}, the Toffoli cost of the restricted version of the isometry goes to $s$ for sufficiently large $n$, analogous to the unrestricted version.
However, for a given $s$ there is a regime of $n$ where the restricted isometry can require substantially more Toffoli gates than the unrestricted version.
For a given state to prepare, it is therefore sensible to consider both options, taking into account the number of available qubits, to find the most suitable sparse-state-preparation circuit.
In Figure~\ref{fig:numerics_whole_prep}, this is done for both a random generic (complex) and a purely real state.

\begin{figure}[t]
    \centering
    \begin{subfigure}[c]{0.55\textwidth}
        \centering
        \includegraphics{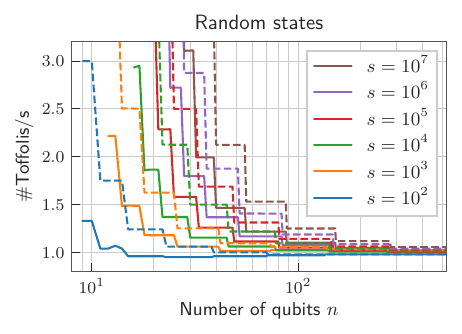}
    \end{subfigure}
    \hfill
    \begin{subfigure}[c]{0.40\textwidth}
        \centering
        {\renewcommand{\arraystretch}{1.15}%
        \begin{tabular}{|c||c|c|}
        \hline
        \multirow{2}{*}{$s$} &
        \multicolumn{2}{c|}{Time (seconds)} \\ 
        \cline{2-3}
        & unrestr. & rest.\\ 
        \hline
        \hline
        $10^2$ & 0.0004 & 0.0003 \\ 
        $10^3$ & 0.0043 & 0.0037 \\ 
        $10^4$ & 0.0897 & 0.0981 \\ 
        $10^5$ & 4.5485 & 4.7399 \\ 
        $10^6$ & 26.056 & 22.356 \\
        $10^7$ & 1488.7 & 813.22 \\ 
        \hline
        \end{tabular}}
        \vspace{0.3cm}
    \end{subfigure}
    \caption{(Left) Toffoli-cost comparison of Algorithm~\ref{alg:isometry_finding} (solid) and Algorithm~\ref{alg:isometry_finding_with_phase} (dashed) on random $s$-sparse states with $n$ qubits, divided by $s$. (Right) Wallclock-times for running implementations of Algorithm~\ref{alg:isometry_finding} (unrestr.) and Algorithm~\ref{alg:isometry_finding_with_phase} (rest.) on $s$~random basis states averaged over different values of $n$. The times were measured on a CPU-node with two AMD EPYC 7452. The implementations use a single thread for $s\leq 10^5$ and multi-threading for larger $s$. The multi-threaded version of the restricted algorithm is faster than the unrestricted version due to the way multi-threading is implemented.}
    \label{fig:numerics_isometry}
\end{figure}

\section{Conclusion \& Outlook}
\label{sec:conclusion}

In this work, we have developed algorithms for sparse state preparation with improved Toffoli cost.
We achieve this by refining an algorithm of Malvetti et al.~\cite{Malvetti2021} for finding and implementing isometries from the full space into a subspace whose qubit number is logarithmic in the number of states to prepare, replacing groups of individual multi-controlled-X gates with more efficient primitives we call partial unary iterations.
The resulting quantum circuits are, to the best of our knowledge, the most efficient sparse state preparation circuits in terms of Toffoli and qubit count in the literature.

We introduce two simple modifications of the well-known unary iteration circuit~\cite{Babbush2018}, which we call restricted and unrestricted partial unary iterations, that might be of independent interest.
They iterate over intervals of values of an address register and apply unitaries on a target register controlled on the interval values.
In the unrestricted version, an improved Toffoli cost is achieved by allowing the circuit to change the target register content for address register values greater than the biggest value in the interval.
This is acceptable for states where one knows that no components with such address register content are present, or, as in our case, the address register values are iterated over from left to right in successive intervals, allowing prior unwanted unitaries applied to the target register to be undone in a later interval.

Finally, we review methods for implementing the dense state preparation step that constitutes the second part of the sparse state preparation algorithm and present some optimizations for the joint cost of the overall algorithm, particularly in the case of real target states.
This is achieved by outsourcing the uncomputation of the angle-register and, for real target states, the encoding of the phase information to the isometry circuit, requiring some changes in the isometry.

Following the standard approach for resource estimates for fault-tolerant algorithms, the discussion in this paper is focused on the number of Toffoli gates as the figure of merit.
As recent advances in generating magic states using cultivation~\cite{gidney2024, sahay2025} and advanced distillation techniques~\cite{ruiz2025} considerably lower the time and space footprint of magic-state creation, the number of Toffoli gates within an algorithm might lose its status as the main proxy for the algorithm's runtime~\cite{huggins2025}.
Therefore, in the future, adapted metrics for measuring the expected runtime of algorithms~\cite{mcardle2025, huggins22025} will be needed, and the cost of existing algorithms reevaluated.

Staying within the Toffoli‑and‑qubit‑count approach to resource estimation, the question of whether sparse state preparation can achieve Toffoli cost sub-linear in $s$ in the practically relevant regime of limited ancilla‑qubit numbers remains open.

The developed state‑preparation circuits, together with the required sub‑circuits, are available as Qualtran‑Bloqs~\cite{rupprecht2025}, with the computationally more expensive classical algorithms for finding the isometry outsourced to SIMD‑accelerated, multi‑threaded Rust code.
While the runtime of these implementations allows us to find isometry circuits for sparse states with $s=10^7$~non‑zero basis states in a few minutes on reasonable hardware, we believe that there is room for improvement, potentially via efficient GPU code.
Achieving such speed‑ups is necessary for running the algorithms on states with $s>10^7$ in an affordable amount of time.

In conclusion, we expect our sparse state preparation scheme to be useful as an efficient primitive in fault-tolerant quantum algorithms, such as quantum simulation and linear system solvers.

\section*{Acknowledgements}

This work has been funded by the Ministry of Economic Affairs, Labour and Tourism Baden Württemberg through the Competence Center Quantum Computing Baden-Württemberg (KQCBW).
We thank Benjamin Desef for valuable discussions and comments on the manuscript.
The circuit diagrams in the paper were created with the \texttt{yquant} package~\cite{desef2021}.

\bibliography{output}

\newpage
\appendix

\section{AND gates}
\label{sec:logical_and}

In Figure \ref{fig:log_and} we recall the definition of the AND gate primitive and its measurement-based adjoint as introduced in~\cite{Babbush2018, Gidney2018}.
\begin{figure}[h]
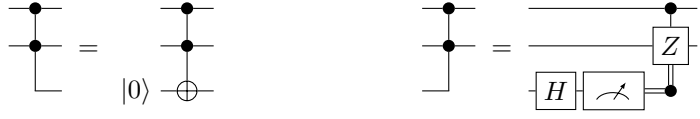

    \centering
    \begin{yquantgroup}[every control/.append style={radius=0.8mm}]
        \registers{
            qubit {} a[2];
            qubit {} b;
        }
        \circuit{
            discard b;
            init b | a;
        }
        \equals
        \circuit{
            init {$\ket0$} b;
            cnot b | a;
        }
        \equals[$\;\;\;\;\;\;\;\;\;\;\;\;\;\;\;\;\;\;\;\;\;\;\;\;$]
        \circuit{
            discard b | a;
        }
        \equals
        \circuit{
            h b;
            measure b;
            z a[1] | a[0], b;
            discard b;
        }
    \end{yquantgroup}
    \caption{AND gate and measurement-based inverse.}
    \label{fig:log_and}
\end{figure}

\section{Proof of bound}
\label{sec:proof}

In this section, we give the inductive proof of the bound \eqref{eq:bound}.
For this, we let $h(x)$ denote the Hamming weight of the binary representation of a natural number $x$.
The bound follows from the following Lemma with $k\coloneqq w-l$ and $r\coloneqq \lceil \frac{s}{m}\rceil$, for which we have 
\begin{equation*}
    2^{k-1}=\frac{2^{w-1}}{2^l} = \frac{2^{w-1}}{m}\leq \biggl \lceil\frac{s}{m}\biggr\rceil \leq \frac{2^w}{m} = 2^k.
\end{equation*}
\begin{lemma}
    Let $k\in\mathbb{N}$.  If $r$ is a natural number with $2^{k-1}\le r\le 2^{k}$ and $C_r\coloneqq\{x\in\mathbb{N}_{< r}\mid h(x)>1\}$, then $S_r\coloneqq\sum_{x\in C_r}(h(x)-1)\le r\left(\frac{k}{2}-1\right)+1$.
\end{lemma}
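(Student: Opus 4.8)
The plan is to reduce the statement to a clean bound on cumulative Hamming weights. Write $g(N):=\sum_{x=0}^{N-1}h(x)$ for the sum of the Hamming weights of the first $N$ nonnegative integers, and recall $h(0)=0$. For every $x\ge 1$ one has $h(x)\ge 1$, hence $\max(h(x)-1,0)=h(x)-1$; moreover $h(x)-1$ already vanishes exactly when $h(x)=1$, so imposing $h(x)>1$ as in the definition of $C_r$ discards only terms equal to $0$, and the term $x=0$ contributes $0$ as well. Therefore
\begin{equation*}
    S_r=\sum_{x=1}^{r-1}\bigl(h(x)-1\bigr)=g(r)-(r-1),
\end{equation*}
and the inequality to be proven, $S_r\le r\left(\frac{k}{2}-1\right)+1$, is \emph{equivalent} to the compact estimate $g(r)\le rk/2$.

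It therefore suffices to show that for every $k\ge 0$ and every integer $N$ with $1\le N\le 2^{k}$ one has $g(N)\le Nk/2$; the Lemma is recovered by restricting to $2^{k-1}\le r\le 2^{k}$. I would prove this by induction on $k$. The base case $k=0$ forces $N=1$, where $g(1)=0\le 0$. For the inductive step with $k\ge 1$, the key observation is that splitting off the leading bit gives $h(2^{k-1}+t)=1+h(t)$ for $0\le t<2^{k-1}$, so that $g(2^{k-1})=(k-1)2^{k-2}$ and, for $1\le N'\le 2^{k-1}$,
\begin{equation*}
    g\bigl(2^{k-1}+N'\bigr)=g(2^{k-1})+\sum_{t=0}^{N'-1}\bigl(1+h(t)\bigr)=(k-1)2^{k-2}+N'+g(N').
\end{equation*}
Now take $N$ with $1\le N\le 2^{k}$. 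If $N\le 2^{k-1}$, the induction hypothesis already gives $g(N)\le N(k-1)/2\le Nk/2$. Otherwise write $N=2^{k-1}+N'$ with $1\le N'\le 2^{k-1}$, insert the hypothesis $g(N')\le N'(k-1)/2$ into the displayed identity, and simplify; the resulting inequality $g(N)\le Nk/2$ reduces, after cancellation, to $N'\le 2^{k-1}$, which holds by construction.

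I do not expect a real obstacle here; the points that need some care are (i) verifying that dropping the constraint $h(x)>1$ and extending $g$ to include the term $x=0$ genuinely leave $S_r$ unchanged, and (ii) checking that the induction closes with no slack wasted: the hypothesis at level $k-1$ delivers exactly $g(N')\le N'(k-1)/2$, whose surplus $N'/2$ over the target value $N'k/2$ is precisely what absorbs the extra $+N'$ contributed by the leading bit. As a consistency check, $g(2^{k})=k2^{k-1}=2^{k}k/2$ shows the bound is tight at $N=2^{k}$, matching the exact value $S_{2^{k}}=2^{k}\left(\frac{k}{2}-1\right)+1$, so that the Lemma is sharp at $r=2^{k}$.
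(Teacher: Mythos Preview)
Your proof is correct and uses the same inductive skeleton as the paper: induction on $k$, splitting off the leading bit via $h(2^{k-1}+t)=1+h(t)$, and applying the hypothesis to the lower half. The difference is packaging. The paper inducts directly on $S_r$, keeping the set $C_r$ and its upper part $\tilde C_r=\{x\in C_r:x\ge 2^{k-1}\}$ explicit, which forces some bookkeeping (e.g.\ $|\tilde C_r|=r-2^{k-1}-1$ because only $2^{k-1}$ has Hamming weight one in the upper block). You instead observe $S_r=g(r)-(r-1)$ and reduce the Lemma to the cleaner equivalent $g(r)\le rk/2$, then prove the slightly stronger statement $g(N)\le Nk/2$ for all $1\le N\le 2^k$. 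This buys a tidier induction with no set manipulation, and the tightness check at $N=2^k$ falls out for free; the paper's version stays closer to the quantity actually used in the Toffoli count but is a bit heavier to read. Both arguments are equally elementary and of the same strength.
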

\begin{proof}
    We prove the statement by induction on $k$.
    For $k=1$, the statement is easy to verify.
    For the induction step, define $\tilde{C}_r\coloneqq\{x\in C_r\mid x\ge 2^{k-1}\}$ and notice that
    $S_r = S_{2^{k-1}} + \sum_{x\in \tilde{C}_r} (h(x) -1)$.
    Let $x\in\tilde{C}_r$ and define $y_x\coloneqq x-2^{k-1}$.
    Then $h(x)=h(y_x)+1$, and using the induction hypothesis we have
    \begin{align*}
        \sum_{x\in \tilde{C}_r} (h(x) - 1) = (r-2^{k-1} - 1) + \sum_{x\in \tilde{C}_r} (h(y_x) -1) &= (r-2^{k-1} - 1) + \sum_{y\in C_{r-2^{k-1}}} (h(y) -1)\\
        &\leq (r-2^{k-1} - 1) + (r-2^{k-1})\left(\frac{k-1}{2} - 1\right)+1.
    \end{align*}
    Here we use the fact that $|\tilde{C}_r|=r-2^{k-1}-1$, that the elements $y_x$ which do not belong to $C_{r-2^{k-1}}$ are precisely those with $h(y_x)=1$, and the induction hypothesis.
    Putting everything together, using $2^{k}\geq r$ and, once more, the induction hypothesis yields
    \begin{align*}
        S_r &= S_{2^{k-1}} + \sum_{x\in \tilde{C}_r} (h(x) -1) \\
        &\leq 2^{k-1}\left(\frac{k-1}{2} - 1\right)+1 + (r-2^{k-1} - 1) + (r-2^{k-1})\left(\frac{k-1}{2} - 1\right)+1 \\
        &= r\left(\frac{k}{2} - \frac{1}{2}\right) - \frac{2^{k}}{2} + 1 \leq r\left(\frac{k}{2} - 1\right) + 1.
    \end{align*}
\end{proof}

\section{Algorithms}

\DecMargin{0.8em}

\begin{algorithm*}[H]
    \DontPrintSemicolon
    \SetAlgoLined
    \KwData{Tableau of $n$-bit bitstrings $[\ket{C_i}]_{i\in [s]}$ written as $\ket{C_i}\coloneqq \ket{C_i'}^s\ket{C_i''}^r$ with $\ket{C_i'}^s$ having $l\coloneqq \lceil{\log(s)}\rceil$ bits.}
    \KwResult{Quantum circuit $G$ and bijection $f\colon [s] \to S$ where $S\subseteq [2^{\lceil{\log(s)}\rceil}]$ such that $G\ket{C_i} = \ket{f(i)}^s\ket{0}^r$ for all $i\in [s]$.}
    \BlankLine
    \vspace{-0.4mm}
    $G \leftarrow [\,]$.\;
    $k \leftarrow 0$.\;
    $batch \leftarrow [\,]$.    \tcp*[r]{Row indices of batch elements in tableau.}
    $m \leftarrow \max_{p\in \mathbb{N}}(2^p \mid 2^p \leq n - l)$. \tcp*[r]{Maximum batch size.}
    \While{True}{
        $b \leftarrow |batch|$.\tcp*[r]{Current batch size.}
        \If{$b = m$}{
            $\text{PUI}_{k-b}^{k-1}\leftarrow$ unrestricted PUI, mapping $\ket{C_j} = \ket{k_j}^s\ket{e_{q_j}}^r \mapsto \ket{k_j}^s\ket{0}^r$ for all $j\in batch$.\;
            $G$.append(PUI$_{k-l}^{k-1}$).\;
            $[\ket{C_i}]_{i\in [s]} \leftarrow [\text{PUI}_{k-b}^{k-1}\ket{C_i}]_{i\in [s]}$. \tcp*[r]{Update tableau.}
            $batch \leftarrow [\,]$.\;
        }
        \If{$\ket{C_i} = \ket{k_i}^s\ket{0}^r$ for all $i\in [s]\setminus batch$,}{
            $\widetilde{\text{PUI}}_{k-b}^{k-1}\leftarrow$ restricted PUI, mapping $\ket{C_j} = \ket{k_j}^s\ket{e_{q_j}}^r \mapsto \ket{k_j}^s\ket{0}^r$ for all $j\in batch$.\;
            $G$.append($\widetilde{\text{PUI}}_{k-l}^{k-1}$).  \tcp*[r]{Restricted PUI: Higher subspace states untouched.}
            $[\ket{C_i}]_{i\in [s]} \leftarrow [\widetilde{\text{PUI}}_{k-b}^{k-1}\ket{C_i}]_{i\in [s]}$.\;
            \If{$\ket{C_j} = \ket{k-1}^s\ket{e_{b-1}}^r$ for some $j\in [s]$,}{
                \tcp*[r]{Both states $\ket{k-1}^s\ket{0}^r$ and $\ket{k-1}^s\ket{e_{b-1}}^r$ present before PUI.}
                $\ket{k'}^s\leftarrow$ any bitstring such that $\ket{C_i} \neq \ket{k'}^s\ket{0}^r$ for all $i\in [s]$. \;
                $\text{CMX}_{l+b-1} \leftarrow$ multi-target CX with control $l+b-1$ mapping $\ket{C_j} \mapsto \ket{k'}^s\ket{e_{b-1}}^r$.\;
                $[\ket{C_i}]_{i\in [s]} \leftarrow [\text{CMX}_{l+b-1}\ket{C_i}]_{i\in [s]}$.\;
                MCX$^{l+b-1}\leftarrow$ multi-control CX with target $l+b-1$, mapping $\ket{C_j} \mapsto \ket{k'}^s\ket{0}^r$.\;
                $[\ket{C_i}]_{i\in [s]} \leftarrow [\text{MCX}^{l+b-1} \ket{C_i}]_{i\in [s]}$.\;
                $G$.extend([CMX$_{l+b-1}$, MCX$^{l+b-1})$]).\;
            }
            $f(i) \leftarrow k_i$ where $\ket{C_i}=\ket{k_i}^s\ket{0}^r$ for all $i \in [s]$.\;
            \Return{$G$, $f$}.\;
        }
        \If{there exists $j \in [s] \setminus batch$ such that $\ket{C_j''}^r \neq \ket{0}^r$,}{
            \If{$r \neq l + b$,}{
                $G$.append(SWAP$_r^{l+b}$). \tcp*[r]{Swap qubits $r$ and $l+b$.}
                $[\ket{C_i}]_{i\in [s]} \leftarrow [\text{SWAP}_r^{l+b}\ket{C_i}]_{i\in [s]}$.\;
            }
        }
        \Else{
            $j\leftarrow$ any $j \in [s] \setminus batch$ such that $\ket{C_j''}^r \neq \ket{0}^r$.\;
            $u \leftarrow$ any bit-index with $l\leq u<l+b$ at which $\ket{C_j}$ is non-zero.\;
            $v \leftarrow$ any bit-index at which $\ket{C_j}$ and $\ket{C_w}=\ket{k_w}^s\ket{e_{u-l}}^r$ from the batch differ.\;
            \tcp*[r]{$\ket{C_w}$ in batch by construction.}
            \If{$\ket{C_j}$ is zero at $v$}{
                $G$.append(Toff$_{u,\bar{v}}^{l+b}$). \tcp*[r]{Toffoli; negative control $v$.}
                $[\ket{C_i}]_{i\in [s]} \leftarrow [\text{Toff}_{u,\bar{v}}^{l+b}\ket{C_i}]_{i\in [s]}$.\;
            }
            \Else{
                $G$.append(Toff$_{u,v}^{l+b}$).\;
                $[\ket{C_i}]_{i\in [s]} \leftarrow [\text{Toff}_{u,v}^{l+b}\ket{C_i}]_{i\in [s]}$.\;
            }
        }
        $\text{CMX}_{l+b} \leftarrow$ multi-target CX with control $l+b$ mapping $\ket{C_j} \mapsto \ket{k}^s\ket{e_b}^r$.\;
        $G$.append(CMX$_{l+b}$).\;
        $[\ket{C_i}]_{i\in [s]} \leftarrow [\text{CMX}_{l+b}\ket{C_i}]_{i\in [s]}$.\;
        $batch$.append($j$).\;
        $k \leftarrow k + 1$.\;
    }
    \caption{Finding the isometry and bijection.}
    \label{alg:isometry_finding}
\end{algorithm*}

\begin{algorithm*}[H]
    \SetKwProg{Fn}{def}{\string:}{}
    \SetKwBlock{Match}{}{}
    \SetKwInput{Kw}{input}
    \SetKwFunction{PuI}{UPUI}
    \DontPrintSemicolon
    \SetAlgoLined
    \KwData{Interval $S\coloneqq [\ket{l}^a, \ket{r}^a]$ of address basis states from a $w$-qubit address register and corresponding quantum gates $U_l, \dots, U_r$ to be applied on a target register.}
    \KwResult{Quantum circuit $G$, subset $L\subseteq \{r+1, \dots, 2^w-1\}$ and mapping $g \colon L \to \{l, \dots, r\}$ such that $G = \sum_{i=l}^{r} \ket{i}\bra{i}^a\otimes U_i + \sum_{i\in L} \ket{i}\bra{i}^a\otimes U_{g(i)}$.}
    \BlankLine

    $G \leftarrow [\,]$.\;
    $I \leftarrow [\ket{0}^a, \ket{2^w-1}^a]$.\;

    \PuI{I, None, $\emptyset$}.\;
    \BlankLine
    \BlankLine

    \Fn{\PuI{interval, control, free}}{
        $d\leftarrow w - \log(|interval|)$.  \tcp*[r]{Current address qubit; big-endian.}
        \If{$interval = [\ket{i}^a, \ket{i}^a]$,}{
            \If{control is not None,}{
                \normalfont
                $G$.append(C$_{control}U_{i}$). \tcp*[r]{$U_i$ controlled on $control$.}
            }\Else{
                \normalfont
                $G$.append($U_{i}$).\;
            }
            \If{$|free| > 0$,}{
                \normalfont
                $J\leftarrow$ integers with $w$-bit binary representation equal to $i$ on indices not in $free$ and at least one $1$ at some position in $free$.\;
                $f(j) \leftarrow i$ for all $j\in J$.\;
            }
            \Return{}\;
        }
        $I_l, I_r \leftarrow$ \textnormal{left and right half-interval of} $interval$.\;
        \If{$I_l \cap S \neq \emptyset$ and $I_r \cap S \neq \emptyset$,}{
            \If{control is not None,}{
                \normalfont
                $G$.append(AND$_{control, \bar{d}}^{anc}$). \tcp*[r]{Negative control $d$; $anc$ is new ancilla.}
                \PuI{$I_l, anc, free$}.\;
                $G$.append(CX$_{control}^{anc}$).\;
                \PuI{$I_r, anc, free$}.\;
                $G$.append(${\text{AND}_{control,\bar{d}}^{anc}}^\dagger$).\;
            }\Else{
                \normalfont
                $G$.append(X$_d$). \tcp*[r]{X gate on $d$.}
                \PuI{$I_l, d, free$}.\;
                $G$.append(X$_d$).\;
                \PuI{$I_r, d, free$}.\;
            }
        }\ElseIf{$I_l \cap S \neq \emptyset$ and $I_r \cap S = \emptyset$,}{
            \PuI{$I_l, control, free \cup \{ d \}$}.\;
        }\Else(\tcp*[f]{Must have $I_l \cap S = \emptyset$ and $I_r \cap S \neq \emptyset$.}){
            \If{control is not None,}{
                \normalfont
                $G$.append(AND$_{control, d}^{anc}$).\tcp*[r]{$anc$ is new ancilla.}
                \PuI{$I_r, anc, free$}.\;
                $G$.append(${\text{AND}_{control, d}^{anc}}^\dagger$).\;
            }\Else{
                \PuI{$I_r, d, free$}.\;
            }
        }
    }
    \caption{Finding the circuit and mapping for unrestricted partial unary iteration.}
    \label{alg:partial_unary_iteration}
\end{algorithm*}

\begin{algorithm}[H]
    \DontPrintSemicolon
    \SetAlgoLined
    \KwData{Tableau of $n$-bit bitstrings $[\ket{C_i}]_{i\in [s]}$ written as $\ket{C_i}\coloneqq \ket{C_i'}^s\ket{C_i''}^r$ with $\ket{C_i'}^s$ having $l\coloneqq \lceil{\log(s)}\rceil$ bits.
    List of $\pm 1$-signs $[a_i]_{i\in [s]}$.}
    \KwResult{Quantum circuit $G$ and bijection $f\colon [s] \to [s]$ such that $G\ket{C_i} = a_i \ket{f(i)}^s\ket{0}^r$ for all $i\in [s]$.}
    \BlankLine
    $G \leftarrow [\,]$.\;
    $k \leftarrow 0$.\;
    $batch \leftarrow [\,]$.\;
    $m \leftarrow \max_{p\in \mathbb{N}}(2^p \mid 2^p \leq n - l)$\;
    $[\ket{D_i}^e]_{i\in [s]} \leftarrow [\ket{1}^e]_{i\in [s]}$.\tcp*[r]{Extend tableau with a $\ket{1}$-column as $(n+1)$-th qubit.}
    \While{True}{
        $b \leftarrow |batch|$.\;
        \If{$b=m$ or $k=s$,}{
            $\widetilde{\text{PUI}}_{k-b}^{k-1}\leftarrow$ restricted PUI, mapping $\ket{C_j}\ket{D_j}^e = \ket{k_j}^s\ket{e_{q_j}}^r\ket{D_j}^e \mapsto a_j\ket{k_j}^s\ket{0}^r\ket{0}^e$ for all $j\in batch$.\tcp*[r]{Fix sign; set extra qubit to $\ket{0}$.}
            $G$.append($\widetilde{\text{PUI}}_{k-b}^{k-1}$).\;
            $[(a_i,\ket{C_i}, \ket{D_i}^e)]_{i\in [s]} \leftarrow [\widetilde{\text{PUI}}_{k-b}^{k-1}(a_i,\ket{C_i},\ket{D_i}^e)]_{i\in [s]}$. \tcp*[r]{Update all tableaus.}
            $batch \leftarrow [\,]$.\;
            \If{$k=s$,}{
                $f(i) \leftarrow k_i$ where $\ket{C_i}=\ket{k_i}^s\ket{0}^r$ for all $i \in [s]$.\;
                \Return{$G$, $f$}.\;
            }
        }
        \If{there exists $j\in [s]$ such that $\ket{C_j}$ is non-zero at bit $r\geq l + b$,}{
            \If{$r \neq l + b$,}{
                $G$.append(SWAP$_r^{l+b}$).\;
                $[\ket{C_i}]_{i\in [s]} \leftarrow [\text{SWAP}_r^{l+b}\ket{C_i}]_{i\in [s]}$.\;
            }
        }
        \ElseIf{there exists $j \in [s] \setminus batch$ such that $\ket{C_j''}^r \neq \ket{0}^r$,} {
            $u \leftarrow$ any bit-index with $l\leq u<l+b$ at which $\ket{C_j}$ is non-zero.\;
            $v \leftarrow$ any bit-index at which $\ket{C_j}$ and $\ket{C_w}=\ket{k_w}^s\ket{e_{u-l}}^r$ in the batch differ.\;
            \If{$\ket{C_j}$ is zero at $v$}{
                $G$.append(Toff$_{u,\bar{v}}^{l+b}$).\;
                $[\ket{C_i}]_{i\in [s]} \leftarrow [\text{Toff}_{u,\bar{v}}^{l+b}\ket{C_i}]_{i\in [s]}$.\;
            }
            \Else{
                $G$.append(Toff$_{u,v}^{l+b}$).\;
                $[\ket{C_i}]_{i\in [s]} \leftarrow [\text{Toff}_{u,v}^{l+b}\ket{C_i}]_{i\in [s]}$.\;
            }
        }
        \Else{
            $j \leftarrow$ any $j\in [s] \setminus batch$ such that $\ket{D_j}^e=\ket{1}^e$. \tcp*[r]{Exists as $k<s$.}
            \For{$l \in batch \setminus \{j\}$ with $\ket{D_l}^e=\ket{1}^e$}{
                $G$.append(CX$_{q_l}^{l+b}$) where $\ket{C_l} = \ket{k_l}^s\ket{e_{q_l}}^r$. \tcp*[r]{Counter effect of later CX.}
            $[\ket{C_i}]_{i\in [s]} \leftarrow [\text{CX}_{q_l}^{l+b}\ket{C_i}]_{i\in [s]}$.
            }
            $G$.append(CX$_n^{l+b}$)  \tcp*[r]{CX gate with control $n$ and target $l+b$.}
            $[(\ket{C_i}, \ket{D_i}^e)]_{i\in [s]} \leftarrow [\text{CX}_n^{l+b}(\ket{C_i},\ket{D_i}^e)]_{i\in [s]}$.\;
        }
        $\text{CMX}_{l+b} \leftarrow$ multi-target CX with control $l+b$ mapping $\ket{C_j} \mapsto \ket{k}^s\ket{e_b}^r$.\;
        $G$.append(CMX$_{l+b}$).\;
        $[\ket{C_i}]_{i\in [s]} \leftarrow [\text{CMX}_{l+b}\ket{C_i}]_{i\in [s]}$.\;
        $batch$.append($j$).\;
        $k \leftarrow k + 1$.\;
    }
    \caption{Finding the isometry and bijection with phase fixing.}
    \label{alg:isometry_finding_with_phase}
\end{algorithm}

\end{document}